\newcommand{\ED}{\mathsf{ED}}
\newcommand{\EQ}{\mathsf{EQ}}
\newcommand{\OR}{\mathsf{OR}}
\newcommand{\DISJ}{\mathsf{DISJ}}
\newcommand{\Tribes}{\mathsf{Tribes}}
\newcommand{\Weights}{\mathsf{Weights}}
\newcommand{\cost}{\text{cost}}
\DeclareMathOperator\adeg{\widetilde{\deg}}
\def\Z{\mathbb{Z}}
\def\bZ{\mathbb{Z}}
\DeclarePairedDelimiter\ceil{\lceil}{\rceil}
\DeclarePairedDelimiter\floor{\lfloor}{\rfloor}
\def\Erdos{Erd\H{o}s\xspace}
\newtheorem*{claim*}{Claim}
\newtheorem*{thm*}{Theorem}
\theoremstyle{definition}
\newtheorem{theorem}{Theorem}[section]
\newtheorem{prob}{Problem}[section]
\newtheorem{lemma}[theorem]{Lemma}
\newtheorem{prop}[theorem]{Proposition}
\newtheorem{fact}[theorem]{Fact}
\newtheorem{cor}[theorem]{Corollary}
\newtheorem{claim}{Claim}
\author{Farzan Byramji\thanks{IIT Kanpur. \texttt{farzan@iitk.ac.in}}}
\title{Query complexity of Boolean functions on slices}
\date{\vspace{-20pt}}
\begin{document}
\maketitle
\begin{abstract}
We study the deterministic query complexity of Boolean functions on slices of the hypercube. The $k^{th}$ slice $\binom{[n]}{k}$ of the hypercube $\{0,1\}^n$ is the set of all $n$-bit strings with Hamming weight $k$. We show that there exists a function on the balanced slice $\binom{[n]}{n/2}$ requiring $n - O(\log \log n)$ queries. We give an explicit function on the balanced slice requiring $n - O(\log n)$ queries based on independent sets in Johnson graphs. On the weight-2 slice, we show that hard functions are closely related to Ramsey graphs. Further we describe a simple way of transforming functions on the hypercube to functions on the balanced slice while preserving several measures.
\end{abstract}

\section{Introduction}
Boolean functions, with domain $\{0,1\}^n$ and range $\{0,1\}$, are well-studied objects in theoretical computer science and combinatorics. Recently there has been much interest in understanding Boolean-valued functions on other domains such as the slice (see, e.g., \cite{o2013kkl}), the multislice (see \cite{braverman2022invariance}), the symmetric group (see \cite{filmus2020hypercontractivity}), the Grassmann scheme (see \cite{khot2018pseudorandom}) and high-dimensional expanders (see \cite{gur2022hypercontractivity}). Dafni et al. \cite{dafni2021complexity} gave a general framework which captures several domains of interest and proved polynomial relations between complexity measures for functions on these domains. 

In this paper, we are interested in Boolean-valued functions on the slice. The $k^{th}$ slice of the hypercube $\{0,1\}^n$ is the set of all $n$-bit strings with Hamming weight $k$, where the Hamming weight of a Boolean string $x$ is the number of ones in it. This will be denoted by $\binom{[n]}{k}$ and sometimes be called the slice of weight $k$. We shall assume $k \leq n/2$ throughout, since otherwise we may consider the slice $\binom{[n]}{n-k}$ which has the same properties as $\binom{[n]}{k}$. 

Much work has been done on extending results from Fourier analysis on the hypercube to the slice and these have applications to combinatorics and theoretical computer science. These include the Goldreich-Levin theorem \cite{rao2021fourier}, the Kindler-Safra theorem \cite{filmus2018invariance, keller2020structure}, an invariance principle \cite{filmus2018invariance, filmus2019harmonicity}, the Friedgut-Kalai-Naor theorem \cite{filmus2016friedgut} and the Nisan-Szegedy theorem \cite{filmus2019boolean}. In combinatorics, these have been applied to show a robust Kruskal-Katona theorem \cite{o2013kkl} and a stability version of the \Erdos -Ko-Rado theorem \cite{filmus2018invariance}.

Here we focus on the query complexity of Boolean functions on slices. The deterministic query complexity of a function $f$, $D(f)$,  is the minimum number of adaptive queries that a deterministic query algorithm, which computes $f(x)$ correctly for every $x$ in the domain, must make on the worst-case input (see Section \ref{sec:prel} for the formal definition).
We investigate the \textit{maximum query complexity} of Boolean functions on slices, which is the quantity $\max_f D(f)$ where the maximum is over all functions $f: \binom{[n]}{k} \rightarrow \{0,1\}$.

On the Boolean hypercube, the question of maximum query complexity is well-understood since there are several natural functions (like Parity and other symmetric functions) which require $n$ queries (which is a trivial upper bound). On any slice however, since the weight of the input is fixed, it is sufficient to make just $n-1$ queries to compute any function. A little more thought shows that for $n \geq 3$, any slice function can be computed using $n-2$ queries (see Proposition \ref{prop:upbound}). Still, it is easy to give examples of functions requiring $\Omega(n)$ queries on any slice. Thus we are interested in the following question: how close can the maximum deterministic query complexity on a slice be to $n$?

We show lower bounds for the maximum query complexity of Boolean functions on the balanced slice $\binom{[n]}{n/2}$.
The usual counting argument of comparing the number of functions and the number of decision trees of some depth $d$ only gives a lower bound of $n - O(\log n)$. Moreover any lower bound which is only based on the number of certificates required to cover all inputs cannot beat this bound since the number of inputs is just $\binom{n}{n/2} = O(\frac{2^n}{\sqrt{n}})$. We show the following by a probabilistic argument.

\begin{restatable}{theorem}{balbound} \label{thm:balbound}
There exists a function $f: \binom{[n]}{n/2} \rightarrow \{0,1\}$ with $D(f) \geq n - O(\log \log n)$.
\end{restatable}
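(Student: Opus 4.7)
I would prove this by the probabilistic method, sharpening the naive counting argument using the fact that nonempty leaves of a shallow decision tree on the balanced slice are heavily constrained. Set $d = n - c$ for a parameter $c$ to be chosen, and let $f : \binom{[n]}{n/2} \to \{0,1\}$ be a uniformly random function. Fix any decision tree $T$ of depth at most $d$; after pruning subtrees that contain no slice inputs (which preserves the function computed on the slice), let $L = L(T)$ denote the number of remaining leaves and $N_\ell$ the number of slice inputs reaching leaf $\ell$, so that $\sum_\ell N_\ell = \binom{n}{n/2}$. For $T$ to compute $f$, each leaf must be monochromatic under $f$, and these events are independent across leaves, yielding
\[
\Pr\bigl[T \text{ computes } f \text{ for some leaf labeling}\bigr] = \prod_\ell 2 \cdot 2^{-N_\ell} = 2^{L - \binom{n}{n/2}}.
\]

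The central estimate is a structural upper bound on $L$. A nonempty leaf is reached by a path whose answers contain $a$ ones for some $a \in [n/2 - c, n/2]$, since the remaining $n/2 - a$ ones must fit into the $n - d = c$ unqueried coordinates. Extending $T$ to a complete binary tree of depth exactly $d$ can only increase the number of nonempty leaves, giving
\[
L \le \sum_{a = n/2 - c}^{n/2} \binom{d}{a} \le (c+1) \binom{d}{\lfloor d/2 \rfloor} = O\!\left(\frac{c \cdot 2^{n - c}}{\sqrt{n}}\right) =: L_{\max},
\]
because each admissible $a$ lies within $c/2$ of the peak $d/2$ of $\binom{d}{\cdot}$.

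A decision tree with $L$ leaves is specified by a binary tree shape (at most $4^L$ by Catalan) and a variable label per internal node ($\le n^L$), for at most $(4n)^L$ structures. A union bound over $L \le L_{\max}$ then gives
\[
\Pr\bigl[\exists\,T \text{ of depth } \le d \text{ computing } f\bigr] \le \sum_{L=1}^{L_{\max}} (4n)^L \cdot 2^{L - \binom{n}{n/2}} \le 2^{O(L_{\max} \log n) - \binom{n}{n/2}}.
\]
Since $\binom{n}{n/2} = \Theta(2^n / \sqrt{n})$, this quantity is $o(1)$ precisely when $L_{\max} \log n = o\bigl(\binom{n}{n/2}\bigr)$, equivalently $2^c / (c \log n) \to \infty$. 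Taking $c = C \log \log n$ with $C \ge 2$ verifies this, so with positive probability $D(f) > n - C \log \log n$, proving the theorem.

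The main delicate step is the bound on $L$: although $T$ has up to $2^d \gg \binom{n}{n/2}$ leaves, slice-feasibility confines the nonempty ones to a window of width $c$ around the peak of $\binom{d}{\cdot}$, giving only an $O(c / 2^c)$-fraction of all leaves. Without this collapse the union bound yields only $n - O(\log n)$; with it, the $(4n)^{L_{\max}}$ factor is absorbed and the bound improves to $n - O(\log \log n)$.
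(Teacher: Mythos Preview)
Your proof is correct and takes a genuinely different route from the paper's. The paper introduces the \emph{minimum balanced certificate complexity} $mBC(f)$: a balanced certificate is one fixing equally many $0$'s and $1$'s, and the paper first shows that any depth-$d$ tree yields a balanced certificate of size at most $d+1$ (by following the root path that alternates $0,1,0,1,\dots$), so $D(f) \geq mBC(f) - 1$. It then applies the Lov\'asz Local Lemma to a uniformly random $f$: for each balanced restriction of size $n-2h$, the event that it is a certificate has probability $2^{1-\binom{2h}{h}}$ and depends on at most $\binom{2h}{h}\binom{n/2}{h}^2$ others, and the LLL condition is satisfied for $h = O(\log\log n)$.

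Your argument is instead a refined union bound directly over decision trees, with one structural observation doing all the work: every nonempty leaf of a depth-$(n-c)$ tree on the balanced slice has its answer sequence within $c/2$ of being balanced, collapsing the effective leaf count from $2^{n-c}$ to $O(c\cdot 2^{n-c}/\sqrt{n})$. This is morally the same observation the paper makes for its single alternating path, but you exploit it globally across all paths, which lets a first-moment argument succeed without the LLL. The paper's route buys a reusable intermediate quantity $mBC$ that the authors then connect to hypergraph Ramsey numbers and the daisy Ramsey problem; your route is shorter and self-contained, and in particular shows that the paper's remark that ``the usual counting argument \dots\ only gives a lower bound of $n - O(\log n)$'' can itself be pushed to $n - O(\log\log n)$ once one restricts the count to slice-feasible leaves.
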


\noindent This is based on the idea that a low-depth decision tree for a function $f$ implies the existence of a small balanced certificate, where a certificate is said to be balanced if it contains an equal number of zeroes and ones. Then we use the Lovasz local lemma to show the existence of functions which do not have any such small balanced certificates. 

While we do not have an explicit function achieving this bound, we observe that known independent sets in Johnson graphs can be used to show the following.

\begin{restatable}{theorem}{expbalbound} \label{thm:expbalbound}
There is an explicit function $f: \binom{[n]}{n/2} \rightarrow \{0,1\}$ with $D(f) \geq n - O(\log n)$.
\end{restatable}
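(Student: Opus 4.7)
The plan is to set $f = \mathbf{1}_I$ for an explicit large independent set $I$ in the Johnson graph $J(n, n/2)$ (vertex set $\binom{[n]}{n/2}$, edges between pairs at Hamming distance $2$), and to bound $D(f)$ from below by counting the $1$-leaves of any decision tree computing $f$.

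Concretely, I would let $p$ be a prime with $n < p \le 2n$ and set
\[
 I = \Bigl\{\,x \in \tbinom{[n]}{n/2} : \sum_{i=1}^{n} i\,x_i \equiv 0 \pmod p\,\Bigr\}.
\]
Any two weight-$n/2$ strings at Hamming distance $2$ differ by a single $0$-$1$ swap between distinct positions $i, j \in [n]$, which changes the weighted sum by $\pm(i-j)$; since $|i-j|<p$ this is nonzero mod $p$, so $I$ is independent in $J(n,n/2)$. Pigeonholing over the $p$ residue classes gives $|I| \ge \binom{n}{n/2}/p$.

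The key structural lemma I would prove is that for $f = \mathbf{1}_I$, every $1$-leaf of any decision tree $T$ computing $f$ has a singleton consistent sub-slice. A leaf at depth $\le d$ corresponds to a sub-slice of the form $\binom{S}{k'}$ (with the positions outside $S$ pre-fixed) of ambient dimension $|S| \ge n - d$. If such a sub-slice at a $1$-leaf had size $\ge 2$, then necessarily $|S| \ge 2$ and $1 \le k' \le |S|-1$, and swapping any free $1$-position with any free $0$-position produces two distinct elements of the sub-slice at Hamming distance $2$, both lying in $I$ --- contradicting independence.

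Consequently the $1$-leaves of $T$ are in bijection with the elements of $I$, so $T$ has at least $|I|$ leaves; since a binary tree of depth $d$ has at most $2^d$ leaves, we get $D(f) \ge \log_2 |I| \ge \log_2 \binom{n}{n/2} - \log_2 p = n - O(\log n)$, using $\log_2 \binom{n}{n/2} = n - \Theta(\log n)$. The one place that needs care is the structural lemma --- once it is in hand, any explicit independent set of size $\binom{n}{n/2}/\operatorname{poly}(n)$ (such as the arithmetic construction above, or any known constant-weight binary code of minimum distance at least $4$) yields the claimed bound.
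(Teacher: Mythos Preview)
Your approach is essentially the same as the paper's: both take $f$ to be the indicator of a large independent set in the Johnson graph built via a modular-sum construction (the paper cites Graham--Sloane and Katona--Makar-Limanov; you use a prime-modulus variant), observe that any $1$-certificate meets the slice in at most one point of $I$, and conclude $D(f)\ge\log_2|I|=n-O(\log n)$ by counting $1$-leaves. One small wrinkle to patch: pigeonhole only guarantees that \emph{some} residue class has size $\ge\binom{n}{n/2}/p$, not necessarily class $0$, so for a genuinely explicit $f$ you should either name a specific large class (the paper invokes Katona--Makar-Limanov to compute $|I_0|$ exactly over $\mathbb{Z}_2^r$ when $n=2^r$) or note that scanning the $p\le 2n$ classes for the largest still yields a polynomial-time computable $f$; also, ``bijection'' should read ``injection from $I$ into the $1$-leaves,'' since a tree may have $1$-leaves whose sub-slice is empty.
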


We also consider constant weight slices, in particular weight-$1$ and weight-$2$ slices. On the weight-$2$ slice $\binom{[n]}{2}$, we show that hard functions are equivalent to Ramsey graphs, which implies the following. 
\begin{restatable}{prop}{wttwobound}
The maximum deterministic query complexity over functions on the weight-2 slice is $n - \Theta(\log n)$.
\end{restatable}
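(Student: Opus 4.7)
The plan is to identify each function $f: \binom{[n]}{2} \to \{0,1\}$ with a graph $G_f$ on the vertex set $[n]$, by declaring $\{i,j\}$ to be an edge of $G_f$ iff $f(\{i,j\}) = 1$. Under this identification, a Ramsey graph (one whose largest clique and independent set both have size at most $2\log_2 n$) will yield a query-hard function, while an arbitrary graph will, via Ramsey's theorem, admit a short decision tree.

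For the lower bound $D(f) \geq n - O(\log n)$, I would let $G_f$ be an $n$-vertex Ramsey graph, whose existence is the classical probabilistic lower bound of \Erdos on $R(k,k)$. Given any decision tree of depth $d$ computing $f$, trace the branch that answers $0$ to every query. This ends at a leaf whose queried positions $Q$ satisfy $|Q| \leq d$, and every weight-$2$ input $\{i,j\}$ with $i,j \in [n] \setminus Q$ is routed to this leaf, so the leaf must evaluate to the same value on all such inputs. In graph terms, $[n] \setminus Q$ is then a clique or an independent set of $G_f$, and the Ramsey hypothesis forces $|[n] \setminus Q| \leq 2\log_2 n$, giving $d \geq n - 2\log_2 n$.

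For the matching upper bound $D(f) \leq n - \Omega(\log n)$ on an arbitrary $f$, I would use Ramsey's theorem to fix a monochromatic subset $S \subseteq [n]$ in $G_f$ of size $s \geq \tfrac{1}{2}\log_2 n$ and then run a two-phase adaptive algorithm. In Phase $1$ the algorithm queries the positions of $[n] \setminus S$ one by one, halting as soon as it has seen two $1$s (in which case the input is determined) or after the full $n - s$ queries having seen zero $1$s (in which case both $1$s lie in $S$ and $f$ is pinned down by the uniform color of $S$). The interesting case is Phase $2$, entered once the algorithm has made $t \leq n - s$ queries and found exactly one $1$, at a position $p$. The second $1$ then lies in the unqueried set $U$ of size $n - t$, and since $p$ is known we can partition $U$ into $U \cap N_{G_f}(p)$ and its complement in $U$; the algorithm queries every position of whichever side is smaller, and the value of $f$ is determined purely by whether a $1$ is seen during this sweep, without ever learning the exact location of the second $1$.

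The worst-case query count is then at most $t + \tfrac{1}{2}(n - t) = \tfrac{1}{2}(n + t)$, maximized at $t = n - s$ and equal to $n - s/2 = n - \Omega(\log n)$. The step I would verify most carefully is the correctness of Phase $2$: the crucial point is that once $p$ is fixed, knowing only on which side of $N_{G_f}(p)$ the second $1$ falls already pins down $f(x)$, and this is what lets the final sub-phase cost only half of the remaining unqueried positions rather than all of them. Combining the two directions yields the claimed bound $\max_f D(f) = n - \Theta(\log n)$ on the weight-$2$ slice.
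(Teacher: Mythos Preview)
Your argument is correct and follows essentially the same route as the paper: the lower bound via the all-zero branch of the tree forcing a large monochromatic set is exactly the paper's Lemma, and your two-phase upper-bound algorithm is the paper's algorithm with the weight-$1$ subroutine (Proposition~\ref{prop:wt1dt}) unfolded inline as the ``query the smaller side of $N_{G_f}(p)$'' step. The only cosmetic difference is that the paper states the upper bound abstractly as $D(f)\le n - m(G_f)/2$ by invoking the weight-$1$ bound, whereas you make the neighborhood partition explicit; your Phase~1 description is slightly muddled (once the first $1$ is seen you should switch to Phase~2 immediately, so the ``two $1$s'' halting condition never actually fires), but the cost analysis $t+\tfrac{1}{2}(n-t)\le n-s/2$ is unaffected.
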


Since we do not have an explicit function on the balanced slice matching the lower bound in Theorem \ref{thm:balbound}, it is natural to look for a class of functions where such a hard function may be found. On the hypercube, symmetric functions form an interesting class of functions with full query complexity. Since symmetric functions become trivial on a slice, a natural class of functions to consider next is the class of functions obtained by composing two symmetric functions. We show that on any slice for any such function, we can save many queries.

\begin{restatable}{theorem}{compsym} \label{thm:compsym}
Let $f: \{0,1\}^n \rightarrow \{0,1\}$, $g: \{0,1\}^m \rightarrow \{0,1\}$ be symmetric boolean functions. Let $N = nm$. Then on any slice of $\{0,1\}^{nm}$,
\[
D(f \circ g) \leq \min\left\{(n-1)m, nm - \frac{n}{m} \right\} \leq N - N^{1/3}.
\]
\end{restatable}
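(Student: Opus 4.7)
The plan is to prove the two upper bounds $D(f\circ g) \leq (n-1)m$ and $D(f\circ g) \leq nm - n/m$ separately; the inequality $\min\{(n-1)m,\,nm-n/m\} \leq N - N^{1/3}$ is a one-line case split on whether $m \leq N^{1/3}$ or $m \geq N^{1/3}$. For the first bound, symmetry of $g$ gives $g(x_i) = g'(w_i)$ with $w_i$ the weight of the $i$-th block and $g':\{0,\ldots,m\}\to\{0,1\}$; I would query all bits of blocks $1,\ldots,n-1$ (using $(n-1)m$ queries) and recover $w_n$ from the slice identity $\sum_i w_i = k$, after which the symmetric $f$ is determined.

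For the second bound the plan is a two-phase algorithm. Phase~1 nonadaptively queries the first $m-1$ bits of every block, costing $(m-1)n$ queries. Writing $t_i$ for the weight of the queried bits and $s_i\in\{0,1\}$ for the unqueried last bit of block $i$, we have $w_i = t_i + s_i$ and $\sum_i s_i = k - \sum_i t_i$ is a known number. Partitioning the blocks by the behavior of $g'$ at $t_i$,
\[
I_0 = \{g'(t_i)=g'(t_i+1)\}, \quad I_+ = \{g'(t_i)<g'(t_i+1)\}, \quad I_- = \{g'(t_i)>g'(t_i+1)\},
\]
the sum $\sum_i g(x_i)$ equals a known constant plus $\alpha - \beta$, where $\alpha = \sum_{I_+} s_i$ and $\beta = \sum_{I_-} s_i$. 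In Phase~2 I would adaptively take the cheaper of: (A) query $s_i$ on all of $I_0$ and on the smaller of $I_+,I_-$, after which the slice sum $\sum_i s_i$ determines the sum on the other side; or (B) query $s_i$ on all of $I_+\cup I_-$. These cost $|I_0|+\min(|I_+|,|I_-|)$ and $|I_+|+|I_-|$ respectively.

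The central step is a short optimization: for $a,b,c \geq 0$ with $a+b+c=n$, the quantity $\min(a+\min(b,c),\,b+c)$ is maximized at $a=b=c=n/3$ with value $2n/3$. That extremum requires three distinct values of $t_i$ that fall into all three classes at once, which in turn needs $m\geq 3$. For $m=2$ at most two of $I_0,I_+,I_-$ can be nonempty and the maximum drops to $n/2$. Adding Phase~1, the total is $(m-1)n + 2n/3 = nm - n/3 \leq nm - n/m$ when $m \geq 3$, and $(m-1)n + n/2 = 3n/2 = nm - n/m$ when $m=2$; the case $m=1$ is trivial since $f\circ g$ is constant on any slice. The main difficulty I anticipate is the Phase~2 analysis --- verifying the optimization and, crucially, recognizing that the generic bound $2n/3$ would be too weak for $m=2$, so the observation that the extremal partition cannot be realized when $g'$ lives on only three weight levels is essential.
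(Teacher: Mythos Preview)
Your argument is correct, but your second algorithm differs from the paper's. Both of you establish the $(n-1)m$ bound identically and begin the second algorithm the same way, querying the first $m-1$ bits of every block. From there the paper takes a simpler, function-agnostic route: since each partial weight $t_i$ lies in $\{0,\ldots,m-1\}$, pigeonhole gives a set $S$ of at least $n/m$ blocks sharing a common partial weight $w_0'$; querying the last bit of every block outside $S$ costs at most $n-n/m$ further queries, and the multiset of weights inside $S$ is then determined by the residual total (some blocks in $S$ have weight $w_0'+1$, the rest $w_0'$). This solves the stronger problem of recovering the full multiset $\{\!\{|x_1|,\ldots,|x_n|\}\!\}$ without ever looking at $g$, and lands exactly on $nm-n/m$.

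Your Phase~2 instead exploits the specific inner function $g$: you partition blocks by the sign of $g'(t_i+1)-g'(t_i)$, reduce to learning $\alpha-\beta$, and observe that querying the two smallest of $I_0,I_+,I_-$ suffices. The resulting bound $nm-n/3$ for $m\geq 3$ is actually \emph{stronger} than the paper's $nm-n/m$ whenever $m>3$, at the cost of a more delicate case analysis (including the separate treatment of $m=2$ you correctly flagged). So: the paper's proof is shorter and more robust (it upper-bounds the harder $\Weights$ problem), while yours is tailored to $f\circ g$ and yields a tighter constant in the saved-queries term for large $m$.
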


On the hypercube, there is an important line of work on understanding polynomial relations and separations between various complexity measures (see, for instance, Table $1$ in \cite{aaronson2021degree}). On the balanced slice, polynomial relations between complexity measures were proved by Dafni et al. \cite{dafni2021complexity}. With the aim of giving separations between measures on the balanced slice, we show that a simple way of transforming functions from the hypercube to functions on the balanced slice preserves several measures.

\begin{theorem} \label{thm:lift}
For every function $g : \{0,1\}^n \rightarrow \{0,1\}$, there exists a function $f: \binom{[2n]}{n} \rightarrow \{0,1\}$ such that
\begin{enumerate}
\itemsep -0.25em
\item $D(f) = D(g)$, $R(f) = R(g)$, $R_0(f) = R_0(g)$, $Q(f) = Q(g)$, $Q_E(f) = Q_E(g)$,
\item $\deg(f) = \deg(g)$, $\adeg(f) = \adeg(g)$,
\item $C(f) = C(g)$, $bs(f) = bs(g)$, $s(g) \leq s(f) \leq bs_2(g) \leq 2s(g)^2$.
\end{enumerate}
\end{theorem}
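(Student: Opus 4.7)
The plan is to use the simplest possible lift. Given $g : \{0,1\}^n \to \{0,1\}$, define $f : \binom{[2n]}{n} \to \{0,1\}$ by
\[
f(x_1, \ldots, x_{2n}) = g(x_1, \ldots, x_n),
\]
so $f$ depends only on the first half of a slice input, and the second half is constrained only by the balance condition. The key tool for every lower bound is the canonical embedding $\iota : \{0,1\}^n \to \binom{[2n]}{n}$ given by $\iota(y) = (y_1, \ldots, y_n, 1-y_1, \ldots, 1-y_n)$; it lies in the balanced slice for every $y$ and satisfies $f(\iota(y)) = g(y)$.

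For each of the query, randomized, and quantum measures the equality is proved by two dual simulations. The upper bound ``measure of $f$ at most measure of $g$'' comes from letting any optimal algorithm for $g$ operate only on the first $n$ coordinates of its slice input. The matching lower bound comes from running any optimal algorithm for $f$ on the canonical image $\iota(y)$: a query to $x_i$ with $i \leq n$ is replaced by a query to $y_i$, and a query to $x_{n+i}$ is replaced by a query to $y_i$ whose answer is then negated (in the quantum case this is a unitary postprocessing step). Since $\iota(y)$ is always balanced and $f(\iota(y)) = g(y)$, the simulated algorithm is correct and the depth, expected cost, or number of quantum queries does not grow, giving equality for $D$, $R$, $R_0$, $Q$, and $Q_E$.

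The same affine substitution handles the polynomial measures. A representing (resp.\ approximating) polynomial $p(x_1,\ldots,x_{2n})$ for $f$ on the slice yields the polynomial $q(y) = p(y_1,\ldots,y_n, 1-y_1,\ldots,1-y_n)$, which represents (resp.\ approximates) $g$ on $\{0,1\}^n$ with no larger degree; conversely any polynomial for $g$ in the first $n$ variables trivially represents $f$ on the slice. This gives $\deg(f) = \deg(g)$ and $\adeg(f) = \adeg(g)$. For the combinatorial measures, a certificate for $g$ at $y$ lifts to a certificate for $f$ at any balanced $x$ with $(x_1,\ldots,x_n) = y$, and a certificate for $f$ at $\iota(y)$ folds into a certificate for $g$ at $y$ by translating each constraint of the form $x_{n+i} = b$ into $y_i = 1-b$. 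For block sensitivity, a sensitive block $B$ of $g$ at $y$ lifts to the block $B \cup (B+n)$ of $f$ at $\iota(y)$---the paired complementary bits flip together, so the weight is preserved---and conversely, any sensitive block of $f$ at $x$ must meet $[n]$ nontrivially, and that intersection is a sensitive block of $g$; disjointness is preserved in both directions.

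The main obstacle, and the reason sensitivity is only sandwiched rather than preserved, is that the slice forbids single-bit flips: a sensitive coordinate of $g$ at $y$ lifts only to a sensitive block of size $2$ in $f$, which yields $s(g) \leq s(f)$, while a sensitive swap in the slice projects to a block of size at most $2$ in $g$, giving $s(f) \leq bs_2(g)$. The remaining inequality $bs_2(g) \leq 2\,s(g)^2$ is the standard fact that block sensitivity with blocks of size at most $k$ is bounded by $k \cdot s^2$ (here with $k=2$). Beyond the sensitivity case the proof amounts to bookkeeping, but one must be careful in each step to verify that queries map correctly, that polynomial degrees do not grow, that certificates remain consistent with the slice constraint, and that lifted or projected blocks remain disjoint.
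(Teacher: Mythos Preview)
Your proposal is correct and follows essentially the same construction and arguments as the paper: define $f(x,y)=g(x)$, use the embedding $\iota(y)=(y,\bar y)$ to simulate $f$-algorithms/polynomials back on the cube, and lift or project certificates and blocks in the obvious way. The only noticeable difference is that for $C(g)\le C(f)$ you fold an $f$-certificate at the special input $\iota(y)$ (which works cleanly because there $x_{n+i}=1-y_i$), whereas the paper proves the stronger pointwise bound $C(g,x)\le C(f,(x,y))$ for arbitrary balanced $(x,y)$ via a slightly more involved padding argument; your simpler version already suffices for the theorem.
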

\noindent The above theorem can be used with known separations on the hypercube to get corresponding separations between the above complexity measures on the balanced slice.

This paper is organized as follows. In Section \ref{sec:prel}, we fix notation and define the relevant complexity measures. In Section \ref{sec:con}, we look at the maximum query complexity on constant weight slices, specifically weights $1$ and $2$. In Section \ref{sec:bal}, we prove lower bounds for the maximum query complexity on the balanced slice and also consider the query complexity of Equality and Element Distinctness when restricted to the balanced slice. In Section \ref{sec:comp}, we prove Theorem \ref{thm:compsym}. In Section \ref{sec:lift}, we prove Theorem \ref{thm:lift}. In Section \ref{sec:remarks}, we conclude with some unanswered questions and discuss connections with Ramsey theory.

\section{Preliminaries}\label{sec:prel}
In this section, we define the measures that are studied here. For functions on slices, the definitions below are consistent with those in \cite{dafni2021complexity}.

The set $\{1, 2, \dots, n\}$ is denoted by $[n]$. For a given $n$ and $1 \leq k \leq n-1$, let $\binom{[n]}{k} := \{x \in \{0,1\}^n \mid |x| = \sum_{i = 1}^n x_i = k\}$. We will often interpret a binary string $x \in [n]$ as the subset of $[n]$ containing the positions of $1$s in $x$. More generally for a finite set $S$ and a nonnegative integer $k$, $\binom{S}{k} = \{T \subseteq S \mid |T| = k\}$. The Johnson graph $J_{n, k}$ has $\binom{[n]}{k}$ as its vertex set and two strings are adjacent if their Hamming distance is exactly $2$.

We shall also need the notion of composing two Boolean functions. If we have $f: \{0,1\}^m \rightarrow \{0,1\}$ and $g : \{0,1\}^n \rightarrow \{0,1\}$, then $f \circ g: \{0,1\}^{mn} \rightarrow \{0,1\}$ takes $m$ disjoint inputs $z_1$, $z_2$, $\dots$, $z_m$ each of length $n$ and $(f \circ g) (z_1, z_2, \dots, z_m) = f(g(z_1), g(z_2), \dots, g(z_m))$.

\vspace{10pt}

\noindent \textbf{Decision tree or query complexity.} A \emph{deterministic decision tree} $A$ on $n$ variables is a binary tree in which each leaf is labeled by either a $0$ or a $1$, and each internal node is labeled with an index $i$ for some $i \in [n]$. For every internal node of $A$, one of the two outgoing edges is labeled $0$ and the other edge is labeled $1$. On an input $x$, for an internal node labeled with $i$, it follows the edge labeled by $x_i$ and the output is the label of the leaf reached. Say that the decision tree $A$ computes $f$ if $A(x) = f(x)$ for all $x$ in the domain of $f$.

The cost of $A$ on $x$, $\cost(A, x)$, is the number of queries made by $A$ on $x$. The cost of $A$, $\cost(A)$, is $\max_x \cost(A, x)$, the maximum cost of $A$ on any input. The \emph{deterministic query complexity} of $f$, $D(f)$, is $\min \cost(A)$, where the minimum is over all $A$'s computing $f$.

\vspace{5pt}\noindent \textbf{Certificate complexity.} Let $C$ be a string in $\{0,1,*\}^n$. We shall call $C$ an \emph{assignment}. The \emph{size} of an assignment is the number of non-$*$'s in it. Say that $C$ is \emph{consistent} with $x \in \{0,1\}^n$ if for all $i \in [n]$ where $C_i \neq *$, $C_i = x_i$. For $b \in \{0,1\}$, $C$ is a \emph{$b$-certificate} of a function $f$ if for all strings $x$ in the domain of $f$ which are consistent with $C$, $f(x) = b$. For a string $x$ in the domain of $f$, $C(f, x)$ is the smallest size of an $f(x)$-certificate consistent with $x$. The \emph{certificate complexity} of $f$, $C(f)$, is defined to be the maximum $C(f, x)$ over all strings $x$ in the domain of $f$.

\vspace{5pt}\noindent \textbf{Unambiguous certificate complexity.} Let $\mathcal{C}$ be a collection of certificates of $f$. $\mathcal{C}$ is said to be an \emph{unambiguous} set of certificates if for each $x$ in the domain of $f$, there is a unique $C \in \mathcal{C}$ which is a certificate of $x$. The complexity of an unambiguous set $\mathcal{C}$ of certificates is the largest size of a certificate in $\mathcal{C}$. The \emph{unambiguous certificate complexity} of $f$, $UC(f)$, is the least complexity of an unambiguous collection of certificates for $f$. Note that when considering functions on a slice (or any partial function on the hypercube), there may be a string in the hypercube outside the domain which can be consistent with multiple certificates in the unambiguous collection.

\vspace{5pt}\noindent \textbf{Subcube partition complexity.} A subcube partition of $f$ is a collection of pairs $(C, b)$ where $C$ is a $b$-certificate of $f$ and the collection of the subcubes corresponding to these certificates $C$ forms a partition of the hypercube. Note that we allow certificates which are not consistent with any inputs in the domain. In such a case, the associated $b$ can be arbitrary. The complexity of a subcube partition is the largest size of a certificate in the partition. The \emph{subcube partition complexity} of $f$, $SC(f)$, is the least complexity of a subcube partition of $f$.

Subcube partition complexity is only defined for functions whose domain is a subset of the hypercube, unlike the other measures described here which can be defined for other domains as done in \cite{dafni2021complexity}.

It is easy to see that a subcube partition also gives an unambiguous collection of certificates with the same complexity. The converse need not be true in general because of what was mentioned just before the definition of subcube partition complexity. For (total) functions $f$ on the hypercube, $UC(f)$ and $SC(f)$ coincide, which is why in the literature they are often used interchangeably. Here we keep them separate since $UC(f)$ can be smaller than $SC(f)$ for a slice function $f$. As an example, $UC(f)$ for any non-constant function on a weight-$1$ slice is $1$. (More generally for any function $f$ on $\binom{[n]}{k}$ with $k \leq n/2$, $UC(f) \leq k$.) On the other hand, for the function $f: \binom{[n]}{1} \rightarrow \{0,1\}$ defined by $f(x_1, x_2, \dots, x_n) = \OR(x_1, x_2, \dots, x_{n/2})$ (where $n$ is even for simplicity), $SC(f) = n/2$. This can be seen by considering the subcube containing $0^n$. Each of the indices corresponding to a $1$-input must be part of this assignment. 

On the balanced slice, we have a relation $SC(f) \leq 4UC(f)^2$ using the results of Dafni et al. \cite{dafni2021complexity}. We do not know if this relation is tight, but we have an example ($\ED$ considered in Section \ref{subs:exp}) showing that the two measures can at least differ by roughly a factor of $2$.

\vspace{5pt}\noindent \textbf{Sensitivity and block sensitivity.} 
For a function $f$, a \emph{sensitive block} $B$ of a string $x$ in the domain of $f$ is a subset of $[n]$ such that $x^B$ lies in the domain of $f$ and $f(x) \neq f(x^B)$ where $x^B$ denotes the string obtained by flipping all the bits at positions in $B$. An $l$-block is a block of size at most $l$. The block sensitivity of $f$ on an input $x$, $bs(f, x)$ is the maximum $b$ such that there are $b$ disjoint sensitive blocks of $x$.
The \emph{block sensitivity} of $f$, $bs(f)$, is the maximum $bs(f, x)$ over all $x$ in the domain of $f$. Define $l$-block sensitivity $bs_l(f)$ analogously where we only allow sensitive blocks of size at most $l$.

For a function $f$ on the hypercube, the sensitivity $f$ on an input $x$, $s(f, x)$ is the maximum $b$ such that there are $b$ disjoint sensitive $1$-blocks of $x$. The \emph{sensitivity} of $f$, $s(f)$, is the maximum $s(f, x)$ over all $x$.

For a function $f$ on a slice, the sensitivity $f$ on an input $x$, $s(f, x)$ is the maximum number $b$ such that there are $b$ disjoint sensitive $2$-blocks of $x$. Note that such a block must consist of a $0$ and a $1$. The \emph{sensitivity} of $f$, $s(f)$, is the maximum $s(f, x)$ over all $x$ in that slice.

\vspace{5pt}\noindent \textbf{Degree and approximate degree.} A polynomial $p(x)$ (with variables $x_1, x_2, \dots, x_n$) represents a function $f$ if for all $x$ in the domain of $f$, $f(x) = p(x)$. The \emph{degree} of a function $f$, $\deg(f)$, is the smallest degree of a polynomial representing $f$. A polynomial $p(x)$ approximately represents $f$, if for all $x$ in the domain of $f$, $|f(x) - p(x)| \leq \frac{1}{3}$. The \emph{approximate degree} of a function $f$, $\adeg(f)$, is the smallest degree of a polynomial approximately representing $f$. For functions on the hypercube, the unique multilinear polynomial representing $f$ has the smallest degree. For functions on a slice, multilinearity is not sufficient to guarantee uniqueness. However any function on a slice has a canonical representation as a \emph{harmonic} multilinear polynomial (see \cite{filmus2016orthogonal}), which also has the lowest degree among all representing polynomials.

\vspace{5pt}\noindent For other measures not defined here, refer to the survey by Buhrman and de Wolf \cite{buhrman2002complexity}.

\section{Constant weight slices}\label{sec:con}
In this section, we look at functions on the weight-$1$ slice and the weight-$2$ slice.

\subsection{Weight-1 slice}
On the weight-$1$ slice, we can identify each string with the index of the unique $1$ in it. The following fact is easy, but useful for later proofs.

\begin{prop}\label{prop:wt1dt}
For any $f: \binom{[n]}{1} \rightarrow \{0,1\}$, $D(f) \leq \floor{\frac{n}{2}}$.
\end{prop}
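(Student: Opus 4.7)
The plan is to exploit the defining feature of the weight-$1$ slice: whenever a query returns $1$, the input is immediately identified, and with it the value of $f$. Write $e_i$ for the weight-$1$ string with its unique $1$ in position $i$, and partition $[n]$ into $A = \{i : f(e_i) = 0\}$ and $B = \{i : f(e_i) = 1\}$. By swapping the roles of $0$ and $1$ if necessary, I may assume $|A| \leq |B|$, so that $|A| \leq \lfloor n/2 \rfloor$.

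I would then describe the following decision tree. Query the positions of $A$ in any fixed order. If some query at position $j \in A$ returns $1$, halt and output $0$; the input must be $e_j$ and $f(e_j) = 0$ by definition of $A$. If every query into $A$ returns $0$, halt and output $1$; since the unique $1$ in the input cannot lie in any position of $A$, it must lie in $B$, and $f$ takes value $1$ on every such input. The number of queries along any root-to-leaf path is bounded by $|A| \leq \lfloor n/2 \rfloor$, giving the stated bound.

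There is essentially no obstacle. The one place where the argument uses a nontrivial property of the domain is the all-zero branch, where the weight-exactly-$1$ condition is what forces the $1$ to reside in $B$ once $A$ has been eliminated, and hence pins down $f$ without further queries. The same analysis in fact gives $D(f) \leq \min(|A|,|B|)$, and a matching lower bound (an adversary answering $0$ as long as both $A$ and $B$ contain an unqueried position) shows this is tight, though only the $\lfloor n/2 \rfloor$ upper bound is needed here.
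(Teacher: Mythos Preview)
Your proof is correct and is exactly the paper's argument, just written out in more detail: the paper simply says ``query the indices in the smaller of $f^{-1}(0)$ and $f^{-1}(1)$,'' which is precisely your set $A$ (after the assumed swap). The additional remark about tightness via an adversary is also in the paper, immediately after the proposition.
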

\begin{proof}
It suffices to query the indices in the smaller of $f^{-1}(0)$ and $f^{-1}(1)$.
\end{proof}

This is tight by considering $f$ defined by $f(x, y) = \OR(x)$  where $x$ has $\floor{\frac{n}{2}}$ bits.

Using Proposition \ref{prop:wt1dt}, we next see that one can always save at least two bits to compute a function on a slice.

\begin{prop} \label{prop:upbound}
For all $n \geq 3$, all $0 \leq k \leq n$, $f: \binom{[n]}{k} \rightarrow \{0,1\}$, $D(f) \leq n - 2$.
\end{prop}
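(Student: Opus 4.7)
My plan is to prove Proposition \ref{prop:upbound} by induction on $n$, using Proposition \ref{prop:wt1dt} for the boundary weights and a single-query reduction for intermediate weights.

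For the base case $n = 3$, the slices $\binom{[3]}{0}$ and $\binom{[3]}{3}$ are singletons, so $D(f) = 0 \leq 1 = n-2$. For $k \in \{1, 2\}$, Proposition \ref{prop:wt1dt} (applied directly for $k = 1$, and to the complemented function for $k = 2$) gives $D(f) \leq \lfloor 3/2 \rfloor = 1 = n-2$.

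For the inductive step, assume the bound holds for every slice of $\{0,1\}^{n-1}$, with $n \geq 4$. The cases $k \in \{0, n\}$ are trivial, and for $k \in \{1, n-1\}$, Proposition \ref{prop:wt1dt} (combined with the complement symmetry $\binom{[n]}{k} \leftrightarrow \binom{[n]}{n-k}$) yields $D(f) \leq \lfloor n/2 \rfloor \leq n-2$, using $n \geq 4$. For the remaining range $2 \leq k \leq n-2$, the algorithm first queries $x_1$. Conditioned on $x_1 = b \in \{0,1\}$, the restriction of $f$ to the remaining $n-1$ coordinates is a Boolean function on the slice $\binom{[n-1]}{k-b}$. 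Both induced slices are well-defined, since $2 \leq k \leq n-2$ forces $0 \leq k-1$ and $k \leq n-1$. The inductive hypothesis then gives a bound of $(n-1) - 2 = n - 3$ on the remaining cost in each branch, and combining with the initial query yields $D(f) \leq 1 + (n-3) = n - 2$.

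There is no real obstacle here: the argument is a clean induction that merely formalizes the idea that on any slice of intermediate weight, one query splits the input into two valid smaller slices, each of which benefits from the inductive savings; the boundary-weight cases are precisely the setting where Proposition \ref{prop:wt1dt} already supplies the required bound once $n \geq 4$. The only things to check are the well-definedness of the two reduced slices in the inductive step and the numerical inequality $\lfloor n/2 \rfloor \leq n-2$ in the regime $n \geq 4$ where it is invoked.
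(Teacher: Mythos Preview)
Your proof is correct and follows essentially the same approach as the paper: both arguments reduce to the base case that any slice function on $3$ bits requires at most one query (via Proposition~\ref{prop:wt1dt}), and reach it by querying enough coordinates to leave a smaller slice. The paper does this in one shot---query the first $n-3$ bits and apply the $3$-bit base case---whereas you unroll the same reduction as an induction on $n$, one query at a time; your separate treatment of the boundary weights $k\in\{1,n-1\}$ is not strictly necessary (the inductive step already covers them), but does no harm.
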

\begin{proof}
By Proposition \ref{prop:wt1dt}, any slice function on $3$ bits can be computed using a single query.

Now consider any $f: \binom{[n]}{k} \rightarrow \{0,1\}$. Query the first $n-3$ bits of the input. Then use the optimal decision tree for the restriction of the function to the already seen bits. This uses at most $n-2$ queries.
\end{proof}

\subsection{Weight-2 slice}
Every function $f: \binom{[n]}{2} \rightarrow \{0,1\}$ can be viewed as a graph $G_f$ with vertices $V = [n]$ and $\{i, j\}$ is an edge if $f(\{i, j\}) = 1$. 

For a graph $G = (V, E)$, let $m(G)$ denote the size of the largest monochromatic set, where a set $S \subseteq V$ is said to be monochromatic if it forms a clique or an independent set in $G$.

\begin{lemma}\label{prop:wt2dt}
For every $f: \binom{[n]}{2} \rightarrow \{0,1\}$, 
\begin{align*}
D(f) \geq n - m(G_f).
\end{align*}
\end{lemma}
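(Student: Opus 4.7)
The plan is to prove the contrapositive via an adversary-style argument: a shallow decision tree for $f$ forces the existence of a large monochromatic set in $G_f$.

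The key idea is to follow the \emph{all-zeros path} in an optimal decision tree $T$ for $f$, that is, the root-to-leaf path obtained by always taking the edge labeled $0$. Let $d$ be the length of this path and let $Q \subseteq [n]$ be the set of queried indices along it, so $|Q| = d \leq D(f)$. Let $U = [n] \setminus Q$ be the unqueried indices and let $b \in \{0,1\}$ be the label of the leaf reached.

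The crucial observation is that every input of the form $x = \{i,j\}$ with $i, j \in U$ is consistent with the answers on this path (all queried coordinates are $0$ in such an $x$), so the tree must output $f(x) = b$ for each such $x$. In graph terms, every pair in $\binom{U}{2}$ has the same $f$-value, so $U$ is either a clique or an independent set in $G_f$. Hence $|U| \leq m(G_f)$, giving
\[
n - D(f) \leq n - d = |U| \leq m(G_f),
\]
which rearranges to $D(f) \geq n - m(G_f)$.

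The main thing to double-check is a small edge case: the argument requires $|U| \geq 2$ for there to actually exist inputs consistent with the all-zeros path, so that the leaf label $b$ is well-defined by correctness of $T$. If $|U| \leq 1$ then $d \geq n-1 \geq n - m(G_f)$ already (since $m(G_f) \geq 1$), and the bound holds trivially. Otherwise $|U| \geq 2$ and the argument above applies verbatim. I expect no serious obstacle here; the proof is essentially a one-paragraph adversary argument, and the only subtlety is making sure to invoke $m(G_f) \geq 1$ (or simply $\geq 2$, which holds whenever $n \geq 2$) to handle the degenerate case cleanly.
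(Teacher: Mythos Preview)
Your proof is correct and is essentially identical to the paper's: both follow the all-zeros path in an optimal decision tree, take the set of unqueried indices, and observe that this set must be monochromatic in $G_f$. Your explicit treatment of the case $|U|\le 1$ is harmless but unnecessary, since a set of size at most one is vacuously both a clique and an independent set, so $m(G_f)\ge |U|$ holds without appealing to the leaf label.
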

\begin{proof}
Consider an optimal decision tree computing $f$. Let $S$ be the set of variables which are \emph{not} queried by this tree on the path from the root always taking the edge labelled $0$. Note that $|S| \geq n - D(f)$ since at most $D(f)$ queries could have been made on this path. Since the function is constant for all edges lying entirely in $S$, $S$ is either an independent set or a clique in $G_f$. So $m(G_f) \geq |S|$. Combining this with $|S| \geq n - D(f)$ proves the statement.
\end{proof}

This shows that to find a hard function, it is sufficient to find a graph $G$ with small $m(G)$ which is the well-studied problem of constructing Ramsey graphs. By the famous probabilistic argument of \Erdos , we know that $m(G)$ can be as small as $O(\log n)$.

\begin{theorem}[\cite{erdos1947some}]
There exists a graph $G$ on $n$ vertices with $m(G) < 2 \log n$. In fact, most graphs have this property.
\end{theorem}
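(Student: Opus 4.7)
The plan is to use the probabilistic method, sampling a graph from the Erd\H{o}s--R\'enyi distribution $G(n, 1/2)$ in which each of the $\binom{n}{2}$ potential edges is included independently with probability $1/2$. For a fixed subset $S \subseteq [n]$ with $|S| = k$, the probability that $S$ induces a clique is exactly $2^{-\binom{k}{2}}$ and the probability that it induces an independent set is also $2^{-\binom{k}{2}}$. Since these two events are disjoint, the probability that $S$ is monochromatic is $2 \cdot 2^{-\binom{k}{2}}$.

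Next, I would take a union bound over all $\binom{n}{k}$ vertex subsets of size $k$ to obtain
\[
\Pr[m(G) \geq k] \leq \binom{n}{k} \cdot 2 \cdot 2^{-\binom{k}{2}} \leq n^k \cdot 2^{1 - k(k-1)/2}.
\]
To conclude, the plan is to set $k = \lceil 2 \log n \rceil$ (using logarithms base $2$, with a possibly small integer adjustment) and verify that this bound is less than $1$. Taking logs, this reduces to checking $k \log n + 1 < k(k-1)/2$, i.e., roughly $(k-1)/2 > \log n$, which holds for $k$ slightly above $2 \log n$. The probabilistic method then yields a graph $G$ with $m(G) < k$, establishing the bound.

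For the stronger ``most graphs'' assertion, I would observe that the same union bound gives a probability tending to $0$ as $n \to \infty$, because the exponent $k \log n - \binom{k}{2}$ becomes strongly negative as soon as $k$ exceeds $2 \log n$ by even a constant. Hence a $1 - o(1)$ fraction of graphs on $n$ vertices (under the uniform distribution, which coincides with $G(n, 1/2)$) satisfy $m(G) < 2 \log n + O(1)$.

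There is no real obstacle: the proof is a textbook application of the first-moment method. The only delicate point is juggling the constants to match the statement $m(G) < 2 \log n$ exactly; if one allows a slightly looser bound such as $m(G) \leq 2 \log_2 n + O(1)$, the inequality $\binom{n}{k} \cdot 2^{1-\binom{k}{2}} < 1$ is immediate and no additional idea is needed.
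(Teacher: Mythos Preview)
The paper does not actually supply a proof of this theorem; it is stated as a cited result from \Erdos's 1947 paper and used as a black box. Your proposal reproduces the classical first-moment argument that constitutes that original proof, and it is correct. There is nothing to compare: your approach is the standard one, and the paper relies on the same result by citation rather than by reproving it.
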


The best explicit construction of Ramsey graphs was given by  Chattopadhyay and Zuckerman \cite{chattopadhyay2016explicit} and independently by Cohen \cite{cohen2016two}.
\begin{theorem}[\cite{chattopadhyay2016explicit,cohen2016two}]
There exists an explicit graph $G$ on $n$ vertices with $m(G) < 2^{(\log \log n)^{O(1)}}$.
\end{theorem}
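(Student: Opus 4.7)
The plan is to reduce the construction of an explicit Ramsey graph to the construction of an explicit \emph{two-source extractor} with a very low min-entropy threshold. Recall that a function $\mathsf{Ext}: \{0,1\}^m \times \{0,1\}^m \to \{0,1\}$ is a two-source extractor for min-entropy $k$ with error $\epsilon < 1/2$ if, whenever $X$ and $Y$ are independent random variables on $\{0,1\}^m$ each with min-entropy at least $k$, the output $\mathsf{Ext}(X, Y)$ is $\epsilon$-close to uniform. I would identify the vertex set of $G$ with $\{0,1\}^m$ (so $n = 2^m$) and declare $\{x, y\}$ to be an edge exactly when $\mathsf{Ext}(x, y) = 1$.

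Next I would show that this construction forces $m(G) \leq 2^{k+1}$. Suppose for contradiction that $S \subseteq \{0,1\}^m$ is monochromatic with $|S| > 2^{k+1}$. Partition $S$ into two disjoint halves $S_1, S_2$, each of size greater than $2^k$, and let $X$ be uniform on $S_1$ and $Y$ uniform on $S_2$. Then $X, Y$ are independent and each has min-entropy strictly greater than $k$. Since $S$ is monochromatic, $\mathsf{Ext}(X, Y)$ is a constant in $\{0,1\}$, hence at statistical distance $1/2$ from uniform, contradicting the extractor guarantee.

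The main obstacle is the construction of explicit two-source extractors at min-entropy threshold $k = (\log m)^{O(1)} = (\log \log n)^{O(1)}$; even extractors at $k = \delta m$ were open for a long time, and pushing $k$ down to polylogarithmic in $m$ is precisely the breakthrough of Chattopadhyay--Zuckerman and Cohen. Their constructions combine several pseudorandomness primitives, including non-malleable extractors, correlation breakers via alternating extraction, and resilient Boolean functions, and I would not attempt to reprove this machinery here. Instead, I would invoke such an extractor as a black box and plug it into the reduction above, obtaining an explicit $G$ on $n = 2^m$ vertices with $m(G) \leq 2^{k+1} = 2^{(\log \log n)^{O(1)}}$, which is the stated bound.
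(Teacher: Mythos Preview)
The paper does not prove this theorem; it is quoted as an external result and used as a black box. Your sketch is the standard reduction from two-source extractors (indeed, two-source dispersers suffice) to Ramsey graphs, and it correctly identifies the polylogarithmic-entropy extractor of Chattopadhyay--Zuckerman and Cohen as the real content, so you have supplied more detail than the paper itself does. One minor technicality you glossed over: $\mathsf{Ext}$ need not be symmetric, so declaring $\{x,y\}$ to be an edge via $\mathsf{Ext}(x,y)$ is ambiguous as written; the usual fix is to use $\mathsf{Ext}(\min(x,y),\max(x,y))$ under a fixed total order on $\{0,1\}^m$ and then take $S_1$, $S_2$ to be the lower and upper halves of the monochromatic set $S$, after which your contradiction goes through unchanged.
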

\noindent This gives an explicit function $f$ on the weight-$2$ slice with $D(f) \geq n - 2^{(\log \log n)^{O(1)}}$.

We also have an upper bound on $D(f)$ in terms of $m(G_f)$, based on a decision tree which checks whether the input lies in this large clique or independent set.
\begin{lemma}
For any $f: \binom{[n]}{2} \rightarrow \{0,1\}$,
\begin{align*}
D(f) \leq n - \frac{m(G_f)}{2}.
\end{align*}
\end{lemma}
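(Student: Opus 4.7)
Let $m = m(G_f)$ and let $S \subseteq [n]$ with $|S| = m$ be a monochromatic set in $G_f$; by swapping the roles of $0$ and $1$ in the output of $f$, we may assume $S$ is a clique (the independent-set case is symmetric). The plan is to design a two-phase decision tree whose hard case reduces to a query problem on the weight-$1$ slice, for which Proposition~\ref{prop:wt1dt} already gives a saving of a factor of two.

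\textbf{Phase 1.} Query all $n - m$ indices in $R := [n] \setminus S$. Since the input has weight~$2$, the number of $1$s observed in $R$ is either $0$, $1$, or $2$. If two $1$s are seen, both positions of the input are now known, so $f$ is determined. If zero $1$s are seen, then both $1$s of the input lie in $S$; as $S$ is a clique in $G_f$, the value of $f$ on any such input is $1$, and again the answer is determined.

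\textbf{Phase 2.} The only remaining case is that exactly one $1$ was found in $R$, at some position $j$. Then the other $1$ of the input lies at some unknown position $i \in S$, and we need only compute $f(\{i,j\})$ as a function of $i$. Define $h : \binom{S}{1} \to \{0,1\}$ by $h(i) = f(\{i,j\})$; this is a Boolean function on the weight-$1$ slice of the ground set $S$, so by Proposition~\ref{prop:wt1dt} it admits a decision tree of cost at most $\lfloor m/2 \rfloor$. Running this decision tree (querying indices in $S$) finishes the computation.

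\textbf{Counting queries.} Phase~1 costs $n - m$ queries. In Phases~1 and~2 combined, the worst case is the third one, costing $(n - m) + \lfloor m/2 \rfloor = n - \lceil m/2 \rceil \leq n - m/2$, which gives the claimed bound. There is no real obstacle here beyond the idea of querying outside $S$ first so that the residual uncertainty lives entirely on a weight-$1$ slice; once that reduction is made, the bound follows from Proposition~\ref{prop:wt1dt} and a short case analysis.
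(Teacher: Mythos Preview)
Your proposal is correct and follows essentially the same approach as the paper: query the complement of a largest monochromatic set first, use the clique/independent-set property when both $1$s land inside it, and otherwise reduce to a weight-$1$ slice problem handled by Proposition~\ref{prop:wt1dt}. The only cosmetic difference is that the paper's algorithm stops Phase~1 as soon as the first $1$ appears (then runs the weight-$1$ tree on all $n-d$ remaining variables), whereas you always finish querying $R$ before starting Phase~2; the worst-case cost is identical either way.
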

\begin{proof}
Let $T$ be a largest clique or independent set in $G_f$, so that $|T|= m(G_f)$. Without loss of generality, we may assume that $T$ is a clique. Let $S := [n]\setminus T$.

We now give an algorithm which makes at most $n - \frac{m(G_f)}{2}$ queries. While there is an unqueried variable in $S$ and no $1$ has been set yet, query a variable from $S$. If a $1$ is seen, then use an optimal decision tree for the current restriction of the function (which is on a weight-$1$ slice). Otherwise if all variables in $S$ have been queried without seeing a $1$, output $1$ since $T$ is a clique. 

Now we bound the number of queries made. If the first $1$ is seen on the $d^{th}$ query (where $1 \leq d \leq |S|$), the maximum number of queries made in such a situation is $d + \frac{n-d}{2}$, where the second term accounts for an optimal decision tree on a weight-$1$ slice with $n-d$ bits (Proposition \ref{prop:wt1dt}). This quantity is maximized when $d = |S|$, and it gives $\frac{n + |S|}{2}$. If no $1$ is seen, then $|S|$ queries have been made. So the maximum number of queries made by the algorithm is at most $\frac{n + |S|}{2}$. Since $|S| = n - m(G_f)$, we have $D(f) \leq n - \frac{m(G_f)}{2}$.
\end{proof}

The two lemmas together show that a function on the weight-$2$ slice with high query complexity is essentially a Ramsey graph. In \cite{erdos1935combinatorial}, the bound $m(G) \geq \frac{\log n}{2}$ for every graph $G$ on $n$ vertices is shown for Ramsey's theorem \cite{ramsey}. Using this, we conclude the following.

\wttwobound*

\section{The Balanced Slice}\label{sec:bal}
For even $n$, the balanced slice of the $n$-cube $\{0,1\}^n$ is $\binom{[n]}{n/2}$. In this section, we investigate the query complexity of functions on the balanced slice.

We start by considering the well-studied Equality function in Subsection \ref{subs:eq} to give an example of how non-trivial savings are possible on the balanced slice. We then proceed to give lower bounds on the maximum query complexity on the balanced slice.
By a probabilistic argument, we prove the existence of a function $f$ with $D(f) \geq n - O(\log \log n)$ in Subsection \ref{subs:lll}. In Subsection \ref{subs:exp} we give an explicit function $f$ with $D(f) \geq n - O(\log n)$ based on independent sets in Johnson graphs.

\subsection{Equality}\label{subs:eq}
The equality function $\EQ$ is defined as follows: $\EQ(x, y)$ (where both $x$ and $y$ have length $n/2$) is $1$ if $x = y$. Due to the weight constraint, this function is non-trivial on the balanced slice only when $n \equiv 0 \mod{4}$. So assume $n = 4k$ from now on.

\begin{prop}
$D(\EQ_{4k}) = 3k-1$.
\end{prop}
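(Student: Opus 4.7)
The plan is to prove $D(\EQ_{4k}) = 3k - 1$ by establishing matching upper and lower bounds.

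For the upper bound $D(\EQ_{4k}) \leq 3k - 1$, I would exhibit the following algorithm. First query all $2k - 1$ bits $x_1, \ldots, x_{2k-1}$ and let $s = \sum_{i < 2k} x_i$. If $s \notin \{k-1, k\}$, then $|x| \neq k$ regardless of $x_{2k}$, so $\EQ = 0$ and we output 0 after just $2k - 1$ queries. If $s = k$, let $S \subseteq [2k-1]$ be the $k$ positions with $x_i = 1$; query $y_i$ for all $i \in S$ ($k$ further queries, total $3k-1$), and output 1 iff all these $y_i$ equal 1. Correctness uses the weight constraint: if all queried $y_i$ equal 1, then the $k$ ones observed in $x$ together with the $k$ ones observed in $y$ already account for the full weight $2k$, forcing the remaining bits $x_{2k}$ and $y_j$ for $j \notin S$ to be 0, which gives $x = y$. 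The case $s = k - 1$ is analogous: first query $y_{2k}$ (if $y_{2k} = 0$ output 0, since then $\EQ = 1$ would require $x_{2k} = 0$, contradicting $|x| = k$); otherwise query $y_i$ at the $k-1$ positions of 1's in $x_1 \ldots x_{2k-1}$, totaling $3k - 1$ queries, and conclude by the same weight-accounting.

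For the lower bound $D(\EQ_{4k}) \geq 3k - 1$, I would use an adversary argument against any decision tree $T$ of depth at most $3k - 2$. The adversary plays against $T$, adaptively answering queries while maintaining the invariant that both a 1-input $(z,z)$ and a 0-input $(x^*, y^*)$ are consistent with the answers so far. Concretely, the adversary answers as if the input were a 1-input $(z,z)$, where $z$ is chosen adaptively so that $z$'s 1-positions are not wholly contained in $P_x$ nor wholly in $P_y$ (and similarly for $z$'s 0-positions). At the end of the game, provided the invariant is maintained, a 0-input can be produced from $(z, z)$ by a weight-preserving swap in unqueried positions of (say) the $y$-half — swap one 1 with one 0 of $z$ among $[2k] \setminus P_y$ — giving some $(z, y^*)$ with $y^* \neq z$ that agrees with all answers. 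Since both a 1-input and this 0-input are consistent after $3k - 2$ queries, $T$ cannot correctly decide $\EQ$ with so few queries.

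The main obstacle is proving that the adversary can always maintain the invariant, especially against trees that exploit overlap between $P_x$ and $P_y$ to leverage the weight constraint in pinning down $(z,z)$ uniquely. I would handle this by a case analysis on the triple $(|P_x|, |P_y|, |P_x \cap P_y|)$: since $|P_x| + |P_y| \leq 3k - 2$ and $|z| = k$, a pigeonhole argument shows the adversary can always pick $z$ so that at least one of the required swap conditions (in $P_x$, in $P_y$, or across both halves) is realizable, since the total number of unqueried bits is $4k - (|P_x| + |P_y|) \geq k + 2$. This forces the existence of the needed 0-input and completes the lower bound.
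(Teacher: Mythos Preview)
Your upper bound algorithm has a genuine bug in the $s=k-1$ branch. After querying $y_{2k}$ and then the $y_i$'s at the $k-1$ positions where $x_i=1$, even if every queried bit equals $1$ you have observed only $(k-1)+1+(k-1)=2k-1$ ones, not $2k$, so the weight-accounting does not close. Concretely, both of the following inputs are consistent with your answers: (i) $x_{2k}=1$ and $y_j=0$ for all unqueried $j$, giving $x=y$; (ii) $x_{2k}=0$ and exactly one unqueried $y_j=1$, giving $|x|=k-1$, $|y|=k+1$, $x\neq y$. Hence the algorithm cannot decide $\EQ$ in this branch. The fix (which the paper uses) is to query $y$ at the $k$ positions of the \emph{majority} bit of $x_1,\dots,x_{2k-1}$: when $s=k-1$ that bit is $0$, and if all $k$ of those $y_j$'s are $0$ you have $2k$ zeros, forcing every remaining bit (including $x_{2k}$) to be $1$, whence $x=y$. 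This uses exactly $(2k-1)+k=3k-1$ queries and no separate probe of $y_{2k}$ is needed.

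Your lower bound plan is essentially the paper's approach: the adversary pretends the input is $(z,z)$ and, after $3k-2$ queries, exhibits a consistent $0$-input by a single weight-preserving swap. The paper makes this concrete by setting $z_i$ alternately to $0$ and $1$ the first time index $i$ is touched (in either half), and then does a short two-case analysis on the number $l$ of indices $i$ where $z_i$ is still unset. When $l\ge 2$, there are an unqueried $0$ and an unqueried $1$ in the $y$-half to swap; when $l\le 1$, since at least $2k-1$ of the $z_i$'s were set and only $3k-2$ queries were made, at most $k-1$ indices have both $x_i$ and $y_i$ queried, and this yields the needed swap across one or both halves. Your proposed case analysis on $(|P_x|,|P_y|,|P_x\cap P_y|)$ is in the same spirit, but the bare inequality $4k-(|P_x|+|P_y|)\ge k+2$ is not by itself enough to guarantee a swap; you need something like the paper's alternation rule (or an equivalent adaptive choice of $z$) to ensure the unqueried region in the relevant half contains both a $0$ and a $1$ of $z$.
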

\begin{claim}
$D(\EQ_{4k}) \leq 3k-1$.
\end{claim}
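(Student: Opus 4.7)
The plan is to construct a deterministic decision tree of depth $3k-1$. The tree begins by querying $x_1,\dots,x_{2k-1}$, spending $2k-1$ queries, and records $s:=\sum_{i<2k}x_i$. Since $|x|\in\{s,s+1\}$ and $\EQ(x,y)=1$ requires $|x|=k$ (so that $|y|=k$ as well, given the slice constraint), whenever $s\notin\{k-1,k\}$ the tree outputs $0$ and halts after $2k-1$ queries.

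In the two remaining cases I spend up to $k$ further queries on bits of $y$ chosen from the observed pattern. If $s=k$, set $A:=\{i<2k:x_i=1\}$, so $|A|=k$, and query $y_i$ for $i\in A$ one at a time, outputting $0$ at the first $y_i=0$ and $1$ if all $k$ queried bits equal $1$. If $s=k-1$, set $B:=\{i<2k:x_i=0\}$, so $|B|=k$, and dually query $y_i$ for $i\in B$ one at a time, outputting $0$ at the first $y_i=1$ and $1$ if all $k$ queried bits equal $0$. Either branch uses at most $(2k-1)+k=3k-1$ queries.

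Correctness of the "output $0$" leaves is witnessed directly by a coordinate where $x_i\neq y_i$. The interesting verification is at the two "output $1$" leaves, where neither $x_{2k}$ nor $y_{2k}$ is ever queried. In the $s=k$ branch, $y_i=1$ for every $i\in A$ gives $|y|\geq|y\cap A|=k$; combined with $|x|\geq s=k$ and $|x|+|y|=2k$, this forces $|x|=|y|=k$, hence $x_{2k}=0$, and $|y|=|y\cap A|$ forces $y$ to vanish on $[2k]\setminus A$, so both $x$ and $y$ are the indicator of $A$. In the $s=k-1$ branch, writing $A_1:=[2k-1]\setminus B$ with $|A_1|=k-1$, the weight identity $|y\cap A_1|+y_{2k}=k+1-x_{2k}$ together with the bounds $|y\cap A_1|\leq k-1$ and $y_{2k}\leq 1$ has the unique feasible solution $x_{2k}=y_{2k}=1$ and $y_i=1$ for all $i\in A_1$, so $x=y$ as indicators of $A_1\cup\{2k\}$.

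The main obstacle I anticipate is the $s=k-1$ output-$1$ branch: equality must be certified entirely from the global weight identity rather than from a direct observation of $x_{2k}$ or $y_{2k}$, and one must be careful to rule out every non-equality input consistent with the $3k-1$ queries by the inequality bounds above.
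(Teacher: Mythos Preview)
Your proof is correct and is essentially the same algorithm and argument as the paper's: the paper phrases the case split via ``the more frequent bit $b\in\{0,1\}$ among $x_1,\dots,x_{2k-1}$'' and the index set $I=\{i:x_i=b\}$, which is exactly your $A$ when $s=k$ (so $b=1$) and your $B$ when $s=k-1$ (so $b=0$). Your correctness verification via the weight identity $|x|+|y|=2k$ is the same idea as the paper's, just with the two cases written out separately rather than handled uniformly in terms of $b$.
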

\begin{proof}
The query algorithm does the following:
\begin{enumerate}
\item Query all $x_i$ for $i \in [2k-1]$. Let $b \in \{0, 1\}$ be the more frequent bit among these. If this bit does not have frequency exactly $k$, reject. 
\item Let $I = \{i \in [2k-1] \mid x_i = b\}$. Query all $y_j$ for $j \in I$. Accept if each of these is $b$. Otherwise reject.
\end{enumerate}
The algorithm makes $2k-1$ queries in step 1. If it doesn't reject in the first step, $|I| = k$ and hence $k$ queries are made in step 2. Thus it makes at most $3k-1$ queries.

Now we show correctness. Suppose the algorithm accepts the input $(x, y)$. Then we have $x_i = y_i = b$ for all $i \in I$. Moreover since $|I| = k$ as it did not reject in step 1, these are all the bits in the input with value $b$. So all the other bits must be $1-b$ and we have $x_j = y_j = 1-b$ for all $j \in [2k] \setminus I$. Thus the algorithm accepts only inputs $(x, y)$ with $x=y$.

Conversely consider how the algorithm proceeds on an input $(x, y)$ with $x = y$. Since $|x| + |y| = 2|x| = 2k$, $x$ contains $k$ 0s and $k$ 1s. So the first $2k-1$ bits of $x$ contain $k$ bits with one value and $k-1$ with the other. Hence the algorithm goes to the next step. Now it accepts since $x_j = y_j = b$ for all $j \in I$. In this manner, $\EQ$ can be computed using $3k-1$ queries.
\end{proof}

\begin{claim}
$D(\EQ_{4k}) \geq 3k-1$.
\end{claim}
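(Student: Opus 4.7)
My plan is an adversary argument. I will describe a strategy that, against any deterministic decision tree of depth at most $3k-2$, answers queries so that at the end both a candidate 1-input $(x^*,x^*)$ with $|x^*|=k$ and a candidate 0-input $(x',y')$ with $x'\ne y'$ and $|x'|+|y'|=2k$ remain consistent with the transcript; correctness of the tree then forces depth at least $3k-1$. The adversary builds up a weight-$k$ string $x^*\in\{0,1\}^{2k}$ on the fly: every query to $x_i$ or $y_i$ is answered by $x^*_i$, reusing the previously committed value if position $i$ was touched from the other side, otherwise freshly committing. Commits obey the global budget $|\{i:x^*_i=0\}|,|\{i:x^*_i=1\}|\le k$ (so $x^*$ can in fact be completed to weight $k$), and subject to this, the adversary chooses the commit to keep an unequal completion possible.

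Write $a=|A|$, $b=|B|$ for the query counts on $x$ and $y$, and $p,q$ for the numbers of committed ones of $x^*$ on $A,B$. An unequal completion can be manufactured by keeping $x'=x^*$ and redistributing the ones of $y'$ among the unqueried $y$-positions so that $y'\ne x^*$ there; this is a consistent unequal input iff $\binom{2k-b}{k-q}\ge 2$, equivalently $\max(0,b-k+1)\le q\le k-1$. The symmetric perturbation of $x'$ with $y'=x^*$ requires $\max(0,a-k+1)\le p\le k-1$. Each feasibility window is nonempty iff the corresponding side has $\le 2k-2$ queries, and the budget $a+b\le 3k-2$ forbids both $a,b\ge 2k-1$ simultaneously (for $k\ge 1$), so at least one window is open throughout the transcript.

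The main technical obstacle, which I would handle by induction on the number of queries, is showing that the adversary's adaptive commit rule can actually steer $p$ (or $q$) inside the open window by the time $3k-2$ queries have been made. The invariant I would maintain is $p\in[\max(0,a-k),k]$ and $q\in[\max(0,b-k),k]$. A free commit (when both $|O|,|Z|<k$) always has at least one choice preserving the invariant on each side; once the global budget saturates and the next commit is forced to $0$ or $1$, the forced value still respects the invariant because the committed zero-count or one-count in $A$ (or $B$) is bounded by the global count $|Z|$ (or $|O|$), which is bounded by $k$. The delicate subcase is when the feasibility window collapses to a single value---this happens for $a\in\{2k-2\}$ with $b$ correspondingly small (or symmetrically in $b$), where I need $p=k-1$ exactly. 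The invariant guarantees $p\in\{k-2,k-1,k\}$ at this moment, and showing that the adversary's earlier balanced commits land $p$ on $k-1$ is the crux; this reduces to a short case check on the commit history, which I expect to be routine but is the place where the tightness of the $3k-1$ bound shows up.
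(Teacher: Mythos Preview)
Your high-level plan matches the paper's: an adversary maintains a template $z\in\{0,1,*\}^{2k}$ and answers every query to $x_i$ or $y_i$ with $z_i$, committing a fresh value the first time position $i$ is touched. The paper's commit rule is completely concrete---strictly alternate $0,1,0,1,\dots$---and its endgame does not track your $p,q$ at all. Instead it cases on $l$, the number of still-uncommitted positions: if $l\ge 2$ the alternation gives two uncommitted positions of opposite value in the completed $z'$, so $(z',z'^{(i\,j)})$ works; if $l\le 1$ a pigeonhole count (at most $(3k-2)-(2k-1)=k-1$ doubly queried positions) yields indices $i,j$ with $z'_i=0$, $z'_j=1$ each having an unqueried coordinate, and the paper allows the \emph{two-sided} swap $(z'^{(i)},z'^{(j)})$ as the 0-input.

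That last move is precisely what your framework lacks, and it is not a cosmetic omission. Your feasibility windows only encode 0-inputs with $x'=x^*$ or $y'=x^*$; this restriction can genuinely fail even under the natural alternating rule. Take $k=2$ and the query sequence $x_1,\,y_3,\,x_2,\,y_4$: alternating commits give $x^*=(0,0,1,1)$, hence $a=b=2$, $p=0$, $q=2$, and both windows (each equal to $\{1\}$) are missed. Yet a 0-input exists, e.g.\ $((0,0,0,1),(1,0,1,1))$, obtained by flipping one bit in each half---exactly the paper's two-sided swap. So the step you label the ``crux'' and ``expect to be routine'' is not routine: with the alternating rule your windows can both close, and you have not specified an alternative commit rule that keeps one open. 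The clean fix is to drop the window bookkeeping, adopt the alternating rule, and do the paper's case analysis on $l$ with the two-sided swap available.
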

\begin{proof}
We give an adversary argument showing that $3k-2$ queries do not suffice to determine the function value. The adversary maintains a string $z \in \{0,1,*\}^{2k}$ which is initially $*^{2k}$ and a bit $b$ that is initially $0$. Here $*$ is meant to signify that the corresponding bits in both $x$ and $y$ haven't been queried yet. When $x_i$ or $y_i$ are queried, the adversary does the following:
\begin{itemize}
\itemsep-0.5em
\item If $z_i = *$, set $z_i := b$ and then $b := 1-b$.
\item Answer $z_i$.
\end{itemize}

In this way, the adversary ensures that the revealed bits of $x$ and $y$ are consistent with $z$. Moreover since $b$ is flipped each time a bit of $z$ is set, around half of the set bits of $z$ are $0$. After the query algorithm finishes, the remaining unset bits of $z$ (if any) can, therefore, be set suitably to get a string $z'$ having weight $k$. Fix any such $z'$. (This $z'$ will also be used for finding a consistent 0-input below.) This implies that there is a 1-input consistent with the revealed bits of $x$ and $y$, namely $(z', z')$.

To see that there is a 0-input consistent with the revealed bits, we consider cases on the number of $*$'s in $z$ (which we shall denote by $l$) at the end of the query algorithm:
\begin{itemize}
\item $l \geq 2$

Since the bits of $z$ are set to $0$ and $1$ in an alternating fashion and at least $2$ are not yet set, there exist $i, j \in [2k]$ such that $z_i = z_j = *$, $z'_i = 0$ and $z'_j = 1$. In this case, $(z', z'^{(i \; j)})$ is a 0-input. It is consistent with the revealed bits of $x$ and $y$ since none of $y_i, y_j$ have been queried.

\item $l \leq 1$

In this case, since at most $3k-2$ queries are made and at least $2k-1$ bits of $z$ are set, the number of indices $i \in [2k]$ such that both $x_i$ and $y_i$ have been queried can be at most $(3k-2) - (2k-1) = k-1$. Since there are $k$ 0's and $k$ 1's in $z'$, there must exist $i, j \in [2k]$ such that $z'_i = 0, z'_j = 1$, it is not the case that both $x_i, y_i$ have been queried and similarly it is not the case that both $x_j, y_j$ have been queried. Then we can get a consistent $0$-input by swapping the unqueried bit among $x_i, y_i$ with the unqueried bit among $x_j, y_j$. In more detail, we have the following.
\begin{itemize}
\item If $x_i$ and $x_j$ have not been queried, then $(z'^{(i \; j)}, z')$ is a consistent 0-input. Similarly if $y_i$ and $y_j$ have not been queried, then $(z', z'^{(i \; j)})$ is a consistent 0-input.
\item If $x_i$ and $y_j$ have not been queried, then $(z'^{(i)}, z'^{(j)})$ is a consistent 0-input and similarly when $x_j$ and $y_i$ have not been queried. \qedhere
\end{itemize}
\end{itemize}
\end{proof}

\subsection{An $n - O(\log \log n)$ lower bound}\label{subs:lll}
Filmus \cite{filmus2022com} introduced the notion of balanced certificates. Let $f: \binom{[n]}{n/2} \rightarrow \{0,1\}$ be a Boolean function on the balanced slice. Let $C \in \{0, 1, *\}^n$ be a certificate for $f$. $C$ is said to be balanced if $|\{i \mid C_i = 0\}| = |\{i \mid C_i = 1\}|$. For $x \in \binom{[n]}{n/2}$, define the balanced certificate complexity of $f$ at $x$, $BC(f, x) = \min_C |C|$ where $C$ varies over all balanced certificates of $f$ consistent with $x$. The balanced certificate complexity of $f$ is defined by $BC(f) = \max_{x \in \binom{[n]}{n/2}} BC(f, x)$. The minimum balanced certificate complexity of $f$ is $mBC(f) = \min_{x \in \binom{[n]}{n/2}} BC(f, x)$. 

$BC(f)$ can be as large as $n$. This happens precisely when there is some input which is sensitive on each of the $(n/2)^2$ transpositions exchanging a $0$ and a $1$. An example of such a function is $\EQ$. On the other hand, the minimum balanced certificate complexity of $\EQ$ is just $2$.

\begin{prop}\label{prop:mbc}
For any $f: \binom{[n]}{n/2} \rightarrow \{0,1\}$, $D(f) \geq mBC(f) - 1$.
\end{prop}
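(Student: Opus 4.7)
The plan is to take any optimal decision tree for $f$ and extract from it a single balanced slice input whose decision-tree path can be padded into a small balanced certificate, thereby bounding $mBC(f)$.

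First I would fix an optimal decision tree $T$ for $f$, of depth $D(f)$, and descend from the root to a leaf greedily, so as to keep the running counts $a$ of 0-answers and $b$ of 1-answers as equal as possible. Concretely, at each internal node, follow the 0-edge whenever $a \leq b$ and the 1-edge otherwise. A short induction shows the invariant $|a - b| \leq 1$ is preserved at every step. Upon reaching a leaf, the path has $a + b \leq D(f)$ and $|a - b| \leq 1$.

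Second I would construct a balanced input $x \in \binom{[n]}{n/2}$ consistent with this path. Set $x_i$ equal to the path's answer on every queried position, and among the $n - a - b$ unqueried positions place exactly $n/2 - b$ ones and $n/2 - a$ zeros. These are nonnegative since $|a - b| \leq 1$ together with $D(f) \leq n - 2$ (Proposition \ref{prop:upbound}) forces $a, b < n/2$, and they sum to $n - a - b$ as required. Running $T$ on $x$ retraces the same path, so the induced partial assignment is an $f(x)$-certificate consistent with $x$.

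Third I would turn this certificate into a balanced one by padding. Without loss of generality $a \geq b$; pick any $a - b$ unqueried positions at which $x_i = 1$ (these exist since $n/2 - b \geq a - b$) and add them to the certificate. The result is balanced of size $a + b + (a - b) = 2a \leq (a+b) + |a-b| \leq D(f) + 1$. Hence $BC(f, x) \leq D(f) + 1$, and taking the minimum over balanced slice inputs gives $mBC(f) \leq D(f) + 1$, which rearranges to the stated inequality.

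The main subtlety is ensuring the greedy descent actually terminates at a leaf that is reachable by some balanced slice input: one needs the tree to have both children at every internal node (true by definition of a decision tree) and the terminal counts to satisfy $a, b \leq n/2$, which is exactly what the invariant $|a - b| \leq 1$ together with the a priori bound $D(f) \leq n - 2$ secures. Once these are in place, the padding step and the reduction to $mBC$ are routine.
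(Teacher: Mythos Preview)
Your proposal is correct and follows essentially the same approach as the paper: both take an optimal decision tree, walk down an alternating $0,1,0,1,\dots$ path (your greedy rule ``go $0$ if $a\le b$, else $1$'' produces exactly this sequence), obtain a near-balanced partial assignment with $|a-b|\le 1$, and pad by one bit if needed to get a balanced certificate of size at most $D(f)+1$. Your write-up is slightly more careful than the paper's in that you explicitly construct a balanced input $x$ consistent with the path and invoke Proposition~\ref{prop:upbound} to ensure $a,b<n/2$, points the paper leaves implicit.
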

\begin{proof}
Let $T$ be a decision tree for $f$. Consider the path obtained by starting at the root of $T$, then alternately picking the branch with $0$ followed by the branch with $1$ repeatedly until we reach a leaf. This defines a certificate, say $C$, for $f$. Let $c_0$ denote $|\{i \mid C_i = 0\}|$ and similarly define $c_1$. Then $c_1 \leq c_0 \leq c_1 + 1$. So $C$ is nearly balanced. If $c_0 = c_1 + 1$, we can pick a bit $i$ such that $C(i) = *$ arbitrarily and get a balanced certificate $C'$ defined by $C'(i) = 1$ and $C'(j) = C(j)$ for all $j \neq i$. This shows that $f$ has a balanced certificate of size at most one more than the depth of $T$. Taking an optimal decision tree $T$ finishes the proof.
\end{proof}

We note that all explicit functions $f$ considered in this section have $mBC(f) \leq \frac{n}{2} + O(1)$, and so the above proposition does not give anything interesting for them. However we can show by the probabilistic method that functions with large $mBC$ exist.

\begin{lemma}\label{lemma:lll}
There exists a function $f: \binom{[n]}{n/2} \rightarrow \{0,1\}$ with $mBC(f) \geq n - O(\log \log n)$.
\end{lemma}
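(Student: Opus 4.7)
The plan is to apply the Lov\'asz Local Lemma to a uniformly random function $f : \binom{[n]}{n/2} \to \{0,1\}$. Choose $r^* = \log_2 \log_2 n + O(1)$, and call a balanced certificate \emph{thin} if it has exactly $n/2 - r^*$ fixed zeros, $n/2 - r^*$ fixed ones, and $2r^*$ free positions; equivalently, size $n - 2r^*$.

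The first step is a reduction: to conclude $mBC(f) \ge n - 2r^* + 1$ it suffices to rule out valid thin balanced certificates. Indeed, any valid balanced certificate $C$ of strictly smaller size has $r > r^*$ free positions on each side of the balance; by arbitrarily picking $r - r^*$ of the free positions to fix as $0$ and a disjoint $r - r^*$ to fix as $1$, one obtains a thin balanced certificate $C'$ whose set of consistent inputs is contained in that of $C$. Since any $b$-certificate remains a $b$-certificate after restricting its set of consistent inputs, $C'$ is valid as well.

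Next set up the bad events. For each thin balanced certificate $C$, let $A_C$ be the event that $C$ is a valid certificate of $f$. The set of inputs consistent with $C$ has size $\binom{2r^*}{r^*}$ (choose which $r^*$ of the $2r^*$ free positions receive value $1$), so $\Pr[A_C] = 2 \cdot 2^{-\binom{2r^*}{r^*}}$. Two events $A_C, A_{C'}$ are mutually independent unless $C$ and $C'$ share a consistent input, and for any fixed input $x \in \binom{[n]}{n/2}$ the number of thin balanced certificates consistent with $x$ is $\binom{n/2}{r^*}^2$ (pick $r^*$ of the $n/2$ zero-positions of $x$ and $r^*$ of the $n/2$ one-positions of $x$ to be free). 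Hence the dependency degree is bounded by $d \le \binom{2r^*}{r^*}\binom{n/2}{r^*}^2$.

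Finally apply the symmetric LLL: it suffices to verify $e \cdot \Pr[A_C] \cdot (d+1) \le 1$, that is,
\[
2e \cdot \binom{2r^*}{r^*} \cdot \binom{n/2}{r^*}^2 \le 2^{\binom{2r^*}{r^*}}.
\]
Using $\binom{n/2}{r^*} \le (n/2)^{r^*}$ and $\binom{2r^*}{r^*} \ge 4^{r^*}/(2r^*+1)$, taking logarithms reduces this to an inequality of the form $2r^* \log_2 n + O(r^*) \le 4^{r^*}/(2r^*+1)$, which holds for $r^* = \log_2 \log_2 n + O(1)$. LLL then yields a positive-probability event in which no thin balanced certificate is valid, giving $mBC(f) \ge n - 2r^* = n - O(\log \log n)$. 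The main subtlety is the reduction to a single certificate size; without it symmetric LLL would fail, because bad events corresponding to certificates of different sizes have wildly differing (double-exponentially spread) probabilities.
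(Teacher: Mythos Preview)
Your argument is correct and matches the paper's proof essentially line for line: uniformly random $f$, bad event for each balanced certificate of the fixed size $n-2h$ (your $r^*$ is the paper's $h$), the same probability $2/2^{\binom{2h}{h}}$, the same dependency bound $\binom{2h}{h}\binom{n/2}{h}^2$, and the symmetric local lemma. The one thing you add is the explicit reduction showing that ruling out certificates of a single size $n-2r^*$ already rules out all smaller balanced certificates (by extending); the paper states only that it exhibits $f$ with no balanced certificate of size $n-2h$ and leaves this monotonicity step implicit, so your writeup is slightly more careful here.
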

\begin{proof}
We will use the Lov\'{a}sz Local Lemma (see, for instance, \cite[Chapter 5]{alon2016probabilistic}). Let $h$ be a nonnegative integer to be fixed later. For a suitable choice of $h$, we will show using the local lemma that there is a function $f$ with no balanced certificate of size $n - 2h$. For ease of notation, set $k = n/2$.

Pick $f$ uniformly at random. For each balanced restriction $C$ fixing $n - 2h$ bits, let $E_C$ denote the event that $C$ is actually a certificate for $f$. In other words, $E_C$ occurs if and only if all the $\binom{2h}{h}$ inputs consistent with $C$ have the same function value under $f$. Then for every $C$, $p := \Pr[E_C] = \frac{2}{2^{\binom{2h}{h}}}$.

Now we bound the degree of the dependency graph on these events $E_C$. $E_{C_1}$ can depend on $E_{C_2}$ only if there is an input which is consistent with both $C_1$ and $C_2$. So to bound the degree of $E_C$ for any balanced $C$, pick any input $x$ consistent with $C$ and then a balanced certificate for $x$. The former can be done in $\binom{2h}{h}$ ways since $C$ is balanced and the latter can be done in $\binom{k}{h}\cdot\binom{k}{h}$ ways. 
So if $d$ denotes the maximum degree of the dependency graph, we have $d+1 \leq \binom{2h}{h} \binom{k}{h}^2$. By the symmetric local lemma, if we pick $h$ such that 

\begin{align*}
    e \cdot \frac{2}{2^{\binom{2h}{h}}} \cdot \binom{2h}{h} \binom{k}{h}^2 \leq 1,
\end{align*}
then there exists an $f$ with no balanced certificate of size $n - 2h$. Taking $h = O(\log \log n)$ suffices.
\end{proof}

\balbound*
\begin{proof}
This follows by combining Lemma \ref{lemma:lll} and Proposition \ref{prop:mbc}.
\end{proof}

The next proposition gives a Ramsey-like statement for $mBC$.
\begin{prop}
For every $h \geq 1$, there exists $k_0$ such that for all $k \geq k_0$ and all functions $f: \binom{[2k]}{k} \rightarrow \{0,1\}$, $mBC(f) \leq 2(k - h)$.
\end{prop}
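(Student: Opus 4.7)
The plan is to recognize that the statement is a direct consequence of the hypergraph Ramsey theorem. First I would unpack the definition: $mBC(f) \leq 2(k-h)$ says there exist disjoint $A, B \subseteq [2k]$ with $|A| = |B| = k-h$ and a bit $b \in \{0,1\}$ such that every $y \in \binom{[2k]}{k}$ with $B \subseteq y$ and $A \cap y = \emptyset$ satisfies $f(y) = b$ (any such $y$ then serves as the input witnessing the bound). Writing $T := [2k] \setminus (A \cup B)$ of size $2h$, the $k$-sets consistent with this certificate are exactly the sets $B \cup S$ for $S \in \binom{T}{h}$, so the task becomes: find $A, B, T$ making $S \mapsto f(B \cup S)$ constant on $\binom{T}{h}$.

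Next, I would fix any $U \subseteq [2k]$ of size $k+h$, set $A := [2k] \setminus U$ (of size $k-h$), and define an auxiliary $2$-coloring $g \colon \binom{U}{h} \to \{0,1\}$ by $g(R) := f(U \setminus R)$, which is well-defined since $|U \setminus R| = k$. For any $T \subseteq U$ with $|T| = 2h$ and $B := U \setminus T$, the identity $(U \setminus T) \cup S = U \setminus (T \setminus S)$ gives $f(B \cup S) = g(T \setminus S)$, so making $S \mapsto f(B \cup S)$ constant on $\binom{T}{h}$ amounts to $g$ being constant on $\{T \setminus S : S \in \binom{T}{h}\} = \binom{T}{h}$, using that complementation inside $T$ is an involution on $\binom{T}{h}$. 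In other words, I just need a monochromatic $2h$-element subset of $U$ in the $h$-uniform $2$-coloring $g$.

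I would then invoke Ramsey's theorem for $h$-uniform hypergraphs, which provides a finite number $N_h$ such that every $2$-coloring of $\binom{[N_h]}{h}$ contains a monochromatic $2h$-subset. Setting $k_0 := N_h - h$ ensures $|U| = k+h \geq N_h$ whenever $k \geq k_0$, so the required $T$ exists and yields a balanced certificate of size $2(k-h)$. The only genuine obstacle is spotting the reformulation as a Ramsey statement on complements inside a well-chosen $U$; once that is in hand, the conclusion is classical, with $k_0$ coming out as a tower of exponentials in $h$ via the Erd\H{o}s--Rado stepping-up construction, and no sharper bound on Ramsey numbers appears to help here.
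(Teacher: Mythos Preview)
Your argument is correct and is essentially the paper's proof: fix one half of the balanced certificate first (you fix the zeros via $A = [2k]\setminus U$, the paper fixes $k-h$ ones), pass to an $h$-uniform $2$-coloring of the remaining $k+h$ elements, and invoke Ramsey's theorem to extract a monochromatic $2h$-set, giving $k_0 = R_h(2h) - h$. One small slip in your closing remark: the tower-type \emph{upper} bound on $k_0$ comes from the Erd\H{o}s--Rado upper bound on hypergraph Ramsey numbers, whereas the stepping-up lemma is the tool for \emph{lower} bounds.
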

\begin{proof}
We will show that $k_0 \leq R_h(2h) - h$, where $R_r(s)$ is the usual Ramsey number for $r$-graphs, i.e. the smallest $n$ such that every coloring of $\binom{[n]}{r}$ with two colours contains a monochromatic complete $r$-graph on $s$ vertices. Let $f: \binom{[2k]}{k} \rightarrow \{0,1\}$ be any function where $k \geq R_h(2h) - h$.

It is convenient to now think of $f$ as a $2$-coloring of $\binom{[2k]}{k}$. Consider the restriction $g$ of $f$ which fixes the last $k-h$ inputs in $[2k]$ to $1$. Then $g$ is a function on the slice $\binom{[k+h]}{h}$ and so can be viewed as a $2$-coloring of $\binom{[k+h]}{h}$. Since $k+h \geq R_h(2h)$, there is a monochromatic complete $h$-graph $H$ on $2h$ vertices in the coloring defined by $g$. This monochromatic $h$-graph $H$ in $\binom{[k+h]}{h}$ can be described by a certificate of $g$ which fixes all $k-h$ vertices outside $H$ to $0$. Combining this with the $k-h$ bits fixed to $1$ gives the required balanced certificate for $f$.
\end{proof}
With this proposition, known upper bounds on Ramsey numbers for hypergraphs (see, e.g., \cite{graham1991ramsey}) imply that $k_0 \leq t_{h-1}(ch)$ where $c$ is an absolute constant and $t_i$ is the tower function defined as $t_0(x) = x$ and for all $i > 0$, $t_i(x) = 2^{t_{i-1}(x)}$. On the other hand, Lemma \ref{lemma:lll} only shows that in the above statement $k_0$ must be at least $2^{2^{\Omega(h)}}$. The upper bound implies that for any function $f$ on the balanced slice, $mBC(f) \leq n - \omega(1)$ where the $\omega(1)$ quantity is smaller than $\log^* n$. 

\subsection{An explicit $n - O(\log n)$ lower bound}\label{subs:exp}

We will use the following standard fact which lower bounds depth by a packing argument.
\begin{fact}\label{prop:cubes}
For a function $f$, suppose $m \geq \max_A |A \cap f^{-1}(1)|$  where $A$ ranges over all $1$-subcubes of $f$. Then $D(f) \geq \log(|f^{-1}(1)|/m)$.
\end{fact}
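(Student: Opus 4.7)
The plan is to use the standard subcube-partition structure of an optimal decision tree and a direct counting inequality. Let $T$ be a decision tree computing $f$ of depth $d = D(f)$. Each leaf $\ell$ of $T$ is reached by a sequence of at most $d$ queries whose answers form an assignment; the set of inputs routed to $\ell$ is exactly the subcube $A_\ell$ determined by that assignment. Since $T$ computes $f$, $f$ is constant on each $A_\ell$, and the collection $\{A_\ell\}$ partitions $\{0,1\}^n$ (or the relevant domain). In particular, every leaf labeled $1$ yields a $1$-subcube of $f$, so by hypothesis $|A_\ell \cap f^{-1}(1)| \leq m$ for each such leaf.

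Now every $x \in f^{-1}(1)$ is routed to some leaf, and that leaf is necessarily labeled $1$. Therefore
\[
|f^{-1}(1)| \;=\; \sum_{\ell \text{ labeled } 1} |A_\ell \cap f^{-1}(1)| \;\leq\; (\text{\# of $1$-leaves}) \cdot m \;\leq\; 2^d \cdot m,
\]
since $T$ has at most $2^d$ leaves in total. Taking logarithms gives $D(f) = d \geq \log(|f^{-1}(1)|/m)$, as claimed. There is no real obstacle here: the only thing to check is that each root-to-leaf path of a decision tree genuinely describes a subcube on which $f$ is forced to be constant, which is immediate from correctness of $T$.
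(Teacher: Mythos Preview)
Your proof is correct and follows essentially the same approach as the paper's own argument: both observe that each $1$-leaf of an optimal decision tree corresponds to a $1$-subcube containing at most $m$ inputs from $f^{-1}(1)$, forcing at least $|f^{-1}(1)|/m$ leaves and hence depth at least $\log(|f^{-1}(1)|/m)$.
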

\begin{proof}
Since each $1$-leaf in a decision tree for $f$ can accept at most $m$ inputs, there must be at least $|f^{-1}(1)|/m$ such leaves. Thus its depth must be at least $\log(|f^{-1}(1)|/m)$.
\end{proof}
\begin{cor}\label{cor:indset}
If $f$ is a Boolean function on the slice $\binom{[n]}{k}$ and $f^{-1}(1)$ is an independent set in the Johnson graph $J(n, k)$, then $D(f) \geq \log(|f^{-1}(1)|)$.
\end{cor}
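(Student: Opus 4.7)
The plan is to apply Fact \ref{prop:cubes} with the parameter $m=1$, so that the bound becomes $D(f) \geq \log |f^{-1}(1)|$. To do this I need to argue that every $1$-subcube of $f$ contains at most one point of the slice (and hence at most one point of $f^{-1}(1)$, since the whole subcube is in $f^{-1}(1)$ by definition).

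Fix any $1$-subcube $A$, and let $a$ be the number of coordinates fixed to $0$, $b$ the number fixed to $1$, and let $F \subseteq [n]$ be the set of free coordinates, with $|F| = n-a-b$. Any element of $A \cap \binom{[n]}{k}$ must place exactly $k-b$ ones on $F$. Suppose for contradiction that there are two distinct such elements $x \neq y$; then both assignments to $F$ use exactly $k-b$ ones and $|F|-(k-b)$ zeros. The hard part (really the only content of the argument) is to find among such $x,y$ a pair at Hamming distance exactly $2$, which would be an edge of $J(n,k)$ entirely inside $f^{-1}(1)$, contradicting independence. But this is immediate: if there are at least two distinct weight-$(k-b)$ assignments to $F$, then in particular $1 \leq k-b \leq |F|-1$, and I can just exhibit a single such pair directly by taking any $x \in A \cap \binom{[n]}{k}$ and swapping one of its $1$-coordinates in $F$ with one of its $0$-coordinates in $F$; the result $y$ lies in $A$ (the swap happens on free coordinates), lies in the slice (weight preserved), and has $|x \triangle y| = 2$.

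Hence any $1$-subcube $A$ satisfies $|A \cap \binom{[n]}{k}| \leq 1$, so certainly $|A \cap f^{-1}(1)| \leq 1$. Fact \ref{prop:cubes} with $m=1$ now yields $D(f) \geq \log(|f^{-1}(1)|/1) = \log|f^{-1}(1)|$, which is the desired bound. I do not anticipate any real obstacle; the whole argument is just the observation that a subcube intersected with a slice is either a single point or contains a Johnson edge.
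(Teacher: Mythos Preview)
Your proposal is correct and is essentially the same argument as the paper's: both show that any $1$-subcube $A$ meeting the slice in at least two points must contain a Johnson-adjacent pair (you swap a free $1$ with a free $0$ of some $x\in A$; the paper picks indices $i,j$ where $x,y$ differ and forms $x^{(ij)}$), which then contradicts independence of $f^{-1}(1)$ and lets you apply Fact~\ref{prop:cubes} with $m=1$. The only cosmetic difference is that you parameterize the subcube by its free set $F$ and note $1\le k-b\le |F|-1$, whereas the paper speaks of the subcube generated by $x$ and $y$.
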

\begin{proof}
It suffices to show that $\max_A |A \cap f^{-1}(1)|$ is $1$ by the above fact. Towards contradiction, suppose there is a $1$-subcube $A$ and strings $x \neq y$ such that $x, y \in A$. 
As $A$ is a subcube, it contains the subcube generated by $x$ and $y$, i.e. the set of inputs consistent with the assignment containing all the bits where $x$ and $y$ are the same. Let $i$ be an index with $x_{i} = 0, y_{i} = 1$ and $j$ be such that $x_{j} = 1, y_{j}  = 0$. Set $x':= x^{(i j)}$. Then $x'$ lies in $A$ and should be a $1$-input. But this contradicts the fact that $f^{-1}(1)$ is an independent set since $x$ is a neighbour of $x'$.
\end{proof}

Now we use Corollary \ref{cor:indset} to find functions with large query complexity. The independence number of Johnson graphs has been studied under various guises but it is not yet exactly known. However the known bounds are sufficient to describe up to additive constants the best one can do using Corollary \ref{cor:indset}.

The following lower bound on the independence number was shown by Graham and Sloane \cite{graham1980lower}, originally as constant-weight codes.
\begin{theorem}[\cite{graham1980lower}]
There exists an independent set of size at least $\frac{1}{n}\binom{n}{k}$ in $J(n, k)$.
\end{theorem}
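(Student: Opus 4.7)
The plan is to exhibit a simple explicit coloring of $\binom{[n]}{k}$ whose color classes are all independent sets in $J(n,k)$, then use pigeonhole. Define the coloring $\phi: \binom{[n]}{k} \to \mathbb{Z}/n\mathbb{Z}$ by
\[
\phi(S) = \sum_{i \in S} i \pmod{n}.
\]
I would first verify that this coloring is proper, i.e.\ that endpoints of every edge of $J(n,k)$ receive distinct colors. Then the preimage of any fixed color is an independent set, and since there are $n$ color classes partitioning the $\binom{n}{k}$ vertices, the largest class has size at least $\frac{1}{n}\binom{n}{k}$.

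For the properness check, suppose $S, S' \in \binom{[n]}{k}$ are adjacent in $J(n,k)$. Then $S' = (S \setminus \{i\}) \cup \{j\}$ for some $i \in S$ and $j \in [n] \setminus S$ with $i \neq j$, so
\[
\phi(S') - \phi(S) \equiv j - i \pmod{n}.
\]
Since $i, j \in \{1, 2, \dots, n\}$ with $i \neq j$, their difference lies in $\{-(n-1), \dots, -1, 1, \dots, n-1\}$ and is therefore nonzero modulo $n$. Hence $\phi(S') \neq \phi(S)$, as needed.

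There is no real obstacle here; the argument is elementary. I would finish by picking a color $c \in \mathbb{Z}/n\mathbb{Z}$ maximizing $|\phi^{-1}(c)|$, and note that by averaging $|\phi^{-1}(c)| \geq \frac{1}{n}\binom{n}{k}$. This set is independent in $J(n,k)$ by the properness of $\phi$, which gives the claimed bound.
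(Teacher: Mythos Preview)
Your proof is correct and is essentially the same as the paper's: both define the map $S \mapsto \sum_{i\in S} i \pmod n$, observe that its level sets are independent in $J(n,k)$, and apply averaging. The only cosmetic difference is that the paper phrases the independence check as ``two sets in the same class sharing $k-1$ elements must coincide,'' whereas you phrase it as ``the coloring is proper''; these are the same observation.
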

We reproduce the short and elegant proof below for completeness.
\begin{proof}
It will be convenient to let the ground set be $\Z_n$. For each $i \in \Z_n$, define 
\[
I_i := \{S \subseteq \Z_n \mid |S| = k, \sum_{x \in S} x = i\}.
\]
The key observation is that each $I_i$ is an independent set. To see this, consider any $S, S' \in I_i$ and suppose $x_1, x_2, \dots, x_{k-1}$ are elements of both $S$ and $S'$. Then the remaining elements $x_k$ of $S$ and $x_k'$ of $S'$ are the same since 
\[
x_1 + x_2 + \dots + x_{k-1} + x_k = i = x_1 + x_2 + \dots + x_{k-1} + x'_k
\]

As the $I_i$'s are disjoint and each $S \in \binom{\Z_n}{k}$ belongs to some $I_i$, $\sum_{i \in \Z_n} |I_i| = \binom{n}{k}$. By averaging,  there exists some $j$ such that $|I_j| \geq \frac{1}{n}\binom{n}{k}$.
\end{proof}

For certain values of $n$ and using a different group, one can get an explicit independent set by exactly computing the cardinalities of these sets, as done by Katona and Makar-Limanov.

\begin{theorem}[\cite{katona2008problem}]
For $n = 2^r$, $k = n/2$ and the group $\bZ_2^r$,
\begin{align*}
|I_0| = \frac{1}{n}\binom{n}{n/2} + \left(1 - \frac{1}{n}\right)\binom{n/2}{n/4}.
\end{align*}
\end{theorem}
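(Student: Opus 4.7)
The plan is to count $|I_0|$ by Fourier analysis on the group $G = \bZ_2^r$. Writing the indicator of $\sum_{x \in S} x = 0$ as an average of characters, we get
\[
|I_0| \;=\; \sum_{S \in \binom{G}{n/2}} \frac{1}{|G|} \sum_{\chi \in \widehat{G}} \chi\!\left(\sum_{x \in S} x\right) \;=\; \frac{1}{n} \sum_{\chi \in \widehat{G}} \;\sum_{S \in \binom{G}{n/2}} \prod_{x \in S} \chi(x),
\]
so that $|I_0| = \tfrac{1}{n} \sum_\chi e_{n/2}\bigl(\{\chi(x)\}_{x \in G}\bigr)$, where $e_{n/2}$ denotes the elementary symmetric polynomial of degree $n/2$ in the $n$ values $\chi(x)$. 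The rest of the argument is to evaluate this sum character by character.

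For the trivial character $\chi \equiv 1$, the contribution is simply $e_{n/2}(1,1,\dots,1) = \binom{n}{n/2}$. For any nontrivial $\chi$, one has $\chi : \bZ_2^r \to \{\pm 1\}$ a nontrivial group homomorphism, so its kernel has index $2$ and $\chi$ takes the value $+1$ exactly $n/2$ times and the value $-1$ exactly $n/2$ times. Consequently, the associated generating function factors as
\[
\prod_{x \in G} \bigl(1 + t\,\chi(x)\bigr) \;=\; (1+t)^{n/2}(1-t)^{n/2} \;=\; (1-t^2)^{n/2},
\]
and extracting the coefficient of $t^{n/2}$ gives $e_{n/2}(\chi) = (-1)^{n/4} \binom{n/2}{n/4}$. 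Since $n = 2^r$ with $r \geq 3$ (the interesting range for the application), $n/4$ is even and the sign is $+1$.

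Summing the contributions of the single trivial character and the $n-1$ nontrivial characters yields
\[
|I_0| \;=\; \frac{1}{n}\binom{n}{n/2} + \frac{n-1}{n}\binom{n/2}{n/4},
\]
which is the claimed identity. The step I expect to require the most care is the verification that the sign $(-1)^{n/4}$ is $+1$, which is where the hypothesis $n = 2^r$ (with $r$ large enough that $4 \mid n/2$) is genuinely used; the Fourier setup, the parity-splitting of nontrivial characters, and the coefficient extraction are all essentially forced once one chooses the character-averaging approach.
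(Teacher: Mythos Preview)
The paper does not prove this theorem; it is quoted from Katona and Makar-Limanov without argument and then used as a black box. So there is no ``paper's own proof'' to compare against, and your task reduces to whether your argument is correct on its own.

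It is. The character-averaging identity $[g=0] = \tfrac{1}{|G|}\sum_\chi \chi(g)$ is exactly the right tool, and your identification of $\sum_{S}\prod_{x\in S}\chi(x)$ with the elementary symmetric polynomial $e_{n/2}$ in the multiset $\{\chi(x):x\in G\}$ is standard. For nontrivial $\chi$, the kernel has index $2$, so the multiset is $n/2$ copies of $+1$ and $n/2$ copies of $-1$, giving the generating function $(1-t^2)^{n/2}$ and coefficient $(-1)^{n/4}\binom{n/2}{n/4}$ at $t^{n/2}$. All of this is correct.

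Your caveat about the sign is the one genuine point to make precise. For $n=2^r$ with $r\ge 3$ we have $n/4 = 2^{r-2}$ even, so the sign is $+1$ and the stated formula holds. For $r=2$ (i.e.\ $n=4$) one has $n/4=1$ and the formula would need a minus sign; indeed a direct check shows $|I_0|=0$ in that case. For $r=1$ the expression $\binom{n/2}{n/4}$ is not even defined. So the identity as written is valid for $r\ge 3$, which is the range that matters for the application in the paper.
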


This immediately gives the following by Corollary \ref{cor:indset}.

\expbalbound*

We next consider the element distinctness function $\ED$ on the balanced slice. Let $k := 2^l$ where $l$ is a parameter. Then $\ED(x_1, x_2, \dots, x_k) = 1$ if all the $x_i$'s are distinct, where each $x_i$ is on $l$ bits. Note that by the choice of $k$, each $1$-input forms a permutation of $\{0,1\}^l$. Also observe that a string representing such a permutation is balanced. This quickly follows by considering the involution $I$ which maps a string to its complement and seeing that each pair $(x_i, I(x_i))$ is balanced.

\begin{lemma}\label{lemma1}
For $f= \ED$, $\max_A |A \cap f^{-1}(1)| = 2$.
\end{lemma}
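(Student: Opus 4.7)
Both bounds follow from understanding when two permutations of $\{0,1\}^l$ can coexist in a single $1$-subcube. For the lower bound, I would take any permutation $x$ that places the values $0^l$ and $0^{l-1}1$ in two of its blocks $i, j$, and let $y$ be the permutation obtained by swapping those blocks. Since $0^l$ and $0^{l-1}1$ differ in only the last bit, the minimal subcube $A$ containing $x$ and $y$ has just two free bits (the last bit of each of blocks $i, j$), and balance forces weight exactly $1$ on them. The two weight-$1$ assignments recover precisely $x$ and $y$, so $A$ is a $1$-subcube with $|A \cap \ED^{-1}(1)| = 2$.

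The upper bound rests on a structural lemma: if two distinct permutations $p, q$ lie in a common $1$-subcube, then $q$ is obtained from $p$ by swapping two blocks whose $l$-bit values differ in exactly one bit. To prove it, I would pass to the minimal subcube $A^*$ containing $p, q$ (automatically a $1$-subcube since $A^* \subseteq A$), observe that its free bits $F$ are exactly the $d_H(p, q)$ positions where $p, q$ disagree, and that since $p, q$ are balanced and agree outside $F$, the balanced strings of $A^*$ are precisely the $\binom{|F|}{|F|/2}$ weight-$|F|/2$ assignments to $F$. Write $q = \sigma(p)$ for a value-permutation $\sigma$. If $\sigma$ is a transposition $(a, b)$ exchanging $p_i = a$ and $p_j = b$, then exactly $2$ assignments yield permutations — namely those placing $\{a, b\}$ in blocks $i, j$, because any other value would duplicate a block already present in $p$ — so $\binom{2 d_H(a,b)}{d_H(a,b)} = 2$, forcing $d_H(a, b) = 1$. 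If instead $\sigma$ moves $\geq 3$ values, I would produce by case analysis a balanced assignment duplicating some $l$-bit value across two blocks in the support of $\sigma$; for example, for two disjoint $d_H = 1$ transpositions $(a,b)(c,d)$, setting each transposition's two free bits to make both of its blocks equal (the weight contributions of the two pairs cancel, maintaining balance) yields a balanced non-permutation.

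Granting the lemma, suppose three distinct permutations $p_1, p_2, p_3$ lie in a $1$-subcube, with $(p_1, p_2)$ and $(p_1, p_3)$ swapping block-pairs $\{i, j\}$ and $\{i', j'\}$ respectively. If $\{i, j\} = \{i', j'\}$, the swap is uniquely determined and $p_2 = p_3$, a contradiction. Otherwise the two pairs share $0$ or $1$ blocks; in both cases a direct block-by-block computation shows that $p_2$ and $p_3$ differ in $4$ or $3$ blocks respectively, contradicting the lemma applied to the pair $(p_2, p_3)$, which requires them to differ in exactly $2$ blocks. Hence $\max_A |A \cap \ED^{-1}(1)| = 2$. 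The main obstacle is the ``$\sigma$ moves $\geq 3$ values'' case of the key lemma: constructing a balanced non-permutation uniformly requires checking separately a long-cycle case and a multiple-cycle case, though in each the construction is elementary.
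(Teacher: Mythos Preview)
Your outline is correct and lands on the same structural lemma the paper isolates as its Claim: any two $1$-inputs sharing a $1$-subcube differ by swapping two blocks whose values are at Hamming distance $1$. Your deduction of the bound $\leq 2$ from this lemma (via the block-pair overlap analysis) is equivalent to the paper's, which phrases it in terms of coordinate sets: the uniqueness in the Claim forces the $X$--$Y$ and $X$--$Z$ difference sets to be either equal (so $Y=Z$) or disjoint (so $d_H(Y,Z)=4$), yielding the contradiction.

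Where the two proofs genuinely diverge is in proving the structural lemma. You pass to the minimal subcube, count its $\binom{|F|}{|F|/2}$ balanced points, and then case-split on the cycle structure of the value-permutation $\sigma$ taking $p$ to $q$: the single-transposition case is clean (only two block-assignments avoid duplication, so $\binom{2d}{d}\le 2$ forces $d=1$), but the ``$\sigma$ moves $\geq 3$ values'' case you leave as a case analysis (long cycle vs.\ multiple cycles) with one worked example. The paper sidesteps this entirely by working at the bit level rather than the block level: pick any $(i_0,j_0)\in I$ with $x_{i_0,j_0}=0$ and any $(i_1,j_1)\in I$ with $x_{i_1,j_1}=1$, and simply swap these two bits in $X$. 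The result is balanced and lies in the subcube, hence must be a permutation, which forces $x_{i_0}^{(j_0)}=x_{i_1}$ and $x_{i_1}^{(j_1)}=x_{i_0}$, whence $j_0=j_1$ and $i_1$ is determined by $i_0$. This single uniform move replaces your entire cycle-structure case analysis and handles the obstacle you flagged at the end without splitting into cases.
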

\begin{proof}
The lemma easily follows from the following claim (proved later) which describes when two $1$-inputs can be present together in the same $1$-subcube.
\begin{claim*}
If two $1$-inputs of $\ED$ are in the same $1$-subcube, then they must be neighbours. Moreover there exist $s, t \in [k]$ and $j \in [l]$ such that if one input is $(x_1, x_2, \dots, x_k)$, the other input is $(x_1, x_2, \dots, x_s^{(j)}, \dots, x_t^{(j)}, \dots, x_k)$ and $x_s = x_t^{(j)}$.
\end{claim*}
Aiming for a contradiction, suppose $X = (x_1, x_2, \dots, x_k)$, $Y = (y_1, y_2, \dots, y_k)$ and $Z = (z_1, z_2, \dots, z_k)$ lie in a $1$-subcube. Then from the claim it follows that the pairs of coordinates where $X, Y$ disagree and those where $X, Z$ disagree are disjoint. But then $Y$ and $Z$ disagree on all $4$ of these coordinates, contradicting the claim. So we have shown that at most two $1$-inputs can lie in a $1$-subcube.
\end{proof}

\begin{proof}[Proof of claim]
Suppose $X = (x_1, x_2, \dots, x_k)$ and $Y = (y_1, y_2, \dots, y_k)$ are in a $1$-subcube. Let $I := \{(i, j) | x_{i, j} \neq y_{i, j}\}$ be the set of all coordinates where $X$ and $Y$ differ. Fix some $(i_0, j_0) \in I$ with $x_{i_0, j_0} = 0$ (this exists due to the fact that $X$ and $Y$ are balanced). 

Let $(i_1, j_1) \in I$ be any index such that $x_{i_1, j_1} = 1$. We will show that $j_0 = j_1$ and $x_{i_0} = x_{i_1}^{(j_1)}$. First note that $i_0 \neq i_1$. If not, $X' = (x_1, x_2, \dots, x_{i_0}^{(j_0 j_1)}, \dots, x_k)$ should be a $1$-input since $X'$ lies in the subcube generated by $X$ and $Y$. However $X'$ is not a $1$-input since $x_{i_0}^{(j_0\; j_1)} = x_i$ for some $i \neq i_0$ as $X$ is a permutation of $\{0,1\}^l$. So $i_0 \neq i_1$.

Switching the bits at $(i_0, j_0)$ and $(i_1, j_1)$ in $X$ gives another $1$-input, \\
$X'' = (x_1, x_2, \dots, x_{i_0}^{(j_0)}, \dots, x_{i_1}^{(j_1)},\dots, x_k)$. Since this is also a permutation of $\{0,1\}^l$, we must have $x_{i_1} = x_{i_0}^{(j_0)}$ and $x_{i_0} = x_{i_1}^{(j_1)}$. Together this gives $x_{i_0} = x_{i_1}^{(j_1)} = (x_{i_0}^{(j_0)})^{(j_1)}$ from which we deduce that $j_0 = j_1$. Since $(i_1, j_1) \in I$ was arbitrary with $x_{i_1, j_1} = 1$, this shows that there is a unique such $(i_1, j_1)$ given by $j_1 = j_0$ and $i_1$ such that $x_{i_1} = x_{i_0}^{(j_0)}$. By symmetry, $(i_0, j_0)$ is also unique. This finishes the proof of the claim.
\end{proof}

\begin{prop}
$D(\ED) \geq n - O(n/\log n)$
\end{prop}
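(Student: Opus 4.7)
The plan is to combine Lemma \ref{lemma1} with Fact \ref{prop:cubes} and then estimate $|\ED^{-1}(1)|$ via Stirling. Since each $1$-input of $\ED$ on the balanced slice (with parameters $k = 2^l$, block length $l$, so $n = kl$) is a listing $(x_1,\ldots,x_k)$ of the $k$ strings of $\{0,1\}^l$ in some order, we immediately have
\[
|\ED^{-1}(1)| = k! = (2^l)!.
\]

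Applying Fact \ref{prop:cubes} with the value $m = 2$ furnished by Lemma \ref{lemma1}, we get
\[
D(\ED) \;\geq\; \log\!\left(\frac{k!}{2}\right).
\]
By Stirling's formula, $\log(k!) = k\log k - \frac{k}{\ln 2} + O(\log k)$, so
\[
D(\ED) \;\geq\; k\log k - O(k).
\]

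Now I would translate back to the input length. Since $n = kl$ and $l = \log k$, we have $k\log k = kl = n$, while $k = n/\log k = n/l$. Moreover $l = \log k$ and $k = 2^l$ give $l = \Theta(\log n)$, so $k = \Theta(n/\log n)$. Substituting,
\[
D(\ED) \;\geq\; n - O(k) \;=\; n - O\!\left(\frac{n}{\log n}\right),
\]
which is exactly the claim. No step looks like a real obstacle: Lemma \ref{lemma1} has already done the combinatorial work of bounding how many $1$-inputs can coexist in a $1$-subcube, and everything else is routine counting plus Stirling.
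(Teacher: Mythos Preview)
Your proposal is correct and follows essentially the same approach as the paper: count $|\ED^{-1}(1)| = k!$, apply Fact~\ref{prop:cubes} with $m=2$ from Lemma~\ref{lemma1}, and then estimate $\log(k!/2)$. The only cosmetic difference is that the paper phrases the final asymptotic step via the Lambert $W$ function (to invert $n = k\log k$), whereas you do it directly with Stirling and the substitution $k\log k = n$, $k = \Theta(n/\log n)$; these are equivalent.
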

\begin{proof}
There are $k! = (2^l)!$ $1$-inputs for $\ED$. From Fact \ref{prop:cubes} and Lemma \ref{lemma1}, we have that $D(\ED) \geq \log\left(\frac{(2^l)!}{2}\right)$, which using asymptotics for the Lambert W function gives $n - O(n/\log n)$.
\end{proof}

By using the above ideas, we also see that $\ED$ requires $n-1$ non-adaptive queries for $k \geq 2$. Computer experiments suggest that the above bound on $D(\ED)$ can be improved to around $n - l/2 = n - O(\log n)$. We do have an upper bound of $n-l/2$ using ideas very similar to those in the upper bound for equality. 
More generally, we may consider $\ED_{k, l}: \binom{[kl]}{kl/2} \rightarrow \{0,1\}$ defined as before but where $k$ is not necessarily $2^l$.
\begin{prob}
Is it true that for all $l \geq 2$ and $3 \leq k \leq 2^l$ such that $kl$ is even, $D(\ED_{k,l}) = kl - \ceil{\frac{l}{2}}$?
\end{prob}

\section{Composed functions on slices}\label{sec:comp}
In this section, we look at composed functions $f \circ g$ where both $f$ and $g$ are symmetric boolean functions on $\{0,1\}^n$ and $\{0,1\}^m$ respectively. The following proposition shows that no  function from this class achieves the bounds of the previous sections for hard functions on a slice.

\compsym*
\begin{proof}
We shall show the upper bound for the following problem, which shall imply the above statement. Given $n$ strings $x_1, x_2, \dots x_n$ each on $m$ bits and the total weight $\sum_{i = 1}^n |x_i|$, we wish to find the multiset of individual weights $\{\{ |x_1|, |x_2|, \dots |x_n| \}\}$. An obvious way is to query all $x_j$ for $1 \leq j \leq n-1$. The weight of $x_n$ can now be determined by using the known total weight. This uses $(n-1)m$ queries. We shall call this algorithm $A$. \label{alg:A}

Another way is the following. For each $i \in [n]$, query all bits of $x_i$ except the last. We use $(m-1)n$ queries in doing this. Now we know partial weights $w_i' = \sum_{j = 1}^{m-1} x_{i, j}$ for all $i \in [n]$. Note that $0 \leq w_i' \leq m-1$ for all $i$. Therefore some $w_0' \in \{0, 1, \dots, m-1\}$ is $w_i'$ for at least $n/m$ many $i \in [n]$. Let $S \subseteq [n]$ be the set of such $i$. Now query $x_{i, m}$ for $i \in [n]\setminus S$. This uses at most $n - n/m$ additional queries. This determines the weights of all $x_i$ with $i \in [n]\setminus S$ exactly. For the remaining weights, if the weight of the remaining unqueried bits is $k$, then $k$ strings $x_i, i \in S$ have weight $w_0 + 1$ and the remaining strings in $S$ have weight $w_0$. So we have determined the multiset of individual weights. This query strategy will be called algorithm $B$. \label{alg:B} This totally uses $nm - \frac{n}{m}$ queries.
\end{proof}

Some natural examples of functions of the above kind include $\DISJ, \EQ$ and $\Tribes$. The above statement gives an upper bound of $3N/4$ for $\EQ$ and $\DISJ$. On the balanced slice, this is tight for $\EQ$ up to an additive constant as shown earlier and is exactly tight for $\DISJ$. In Appendix \ref{sec:weights}, we examine whether the above general strategy for the problem $\Weights_{n, m, k}$ of computing the multiset of weights of $x_i$'s is tight. 

\section{Lifting separations from the hypercube to the balanced slice} \label{sec:lift}
In this section, we see how an obvious way to obtain a function on the $2n$-bit balanced slice from a function on the $n$-hypercube preserves many measures. Specifically this allows us to readily obtain separations between measures on the balanced slice using known separations for the hypercube (see Table $1$ in \cite{aaronson2021degree}).

For any $g: \{0,1\}^n \rightarrow \{0,1\}$, we can define a function on the balanced slice $f_g: \binom{[2n]}{n} \rightarrow \{0,1\}$ by $f_g(x, y) = g(x)$ where $x, y$ are each of $n$ bits. We shall drop the subscript $g$ whenever it causes no confusion. Most of the relations are straightforward, but we prove everything for completeness.

\begin{lemma} For any $g: \{0,1\}^n \rightarrow \{0,1\}$ and $f := f_g$
\begin{enumerate}
\item $D(f) \leq D(g)$, $R(f) \leq R(g)$, $R_0(f) \leq R_0(g)$, $Q(f) \leq Q(g)$, $Q_E(f) \leq Q_E(g)$, $\deg(f) \leq \deg(g)$, $\adeg (f) \leq \adeg(g)$
\item $bs(f) \geq bs(g), s(f) \geq s(g)$
\item $C(f) \leq C(g)$, $UC(f) \leq UC(g)$, $SC(f) \leq SC(g)$
\end{enumerate}
\end{lemma}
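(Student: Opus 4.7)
The plan is to observe that $f$ depends only on its first $n$ coordinates, so any ``algorithm object'' (decision tree, randomized/quantum algorithm, polynomial, certificate, subcube partition) for $g$ lifts to one for $f$ by ignoring the $y$-coordinates, while any ``hardness witness'' (sensitive block at an input $x$) for $g$ lifts to a witness for $f$ at the slice input $(x, \bar{x})$ by mirroring the block onto the $y$-side to preserve weight. Both lifts are completely routine, so the proof is really a checklist of direct verifications.

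For the upper bounds in part (1), given any deterministic, randomized, or quantum query algorithm $A$ for $g$, I build $A'$ for $f$ which, on input $(x,y)$, simulates $A$ and answers each query of index $i \in [n]$ by querying $x_i$. Since $f(x,y) = g(x)$, $A'$ computes $f$ with the same query cost and error profile, yielding all five algorithmic inequalities at once. For $\deg(f) \leq \deg(g)$ and $\adeg(f) \leq \adeg(g)$, any polynomial $p(x_1, \ldots, x_n)$ (approximately) representing $g$ on $\{0,1\}^n$, viewed as a polynomial in $2n$ variables independent of the last $n$, (approximately) represents $f$ on the slice with the same degree.

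For part (3), given a $b$-certificate $C \in \{0,1,*\}^n$ for $g$ at $x$, the assignment $C' := (C, *^n) \in \{0,1,*\}^{2n}$ is a $b$-certificate for $f$ at $(x, y)$ of the same size: any slice input $(x', y')$ consistent with $C'$ has $x'$ consistent with $C$, so $g(x') = b$ and hence $f(x', y') = b$. This gives $C(f) \leq C(g)$. Applied coordinate-wise to an unambiguous collection $\mathcal{C}$ for $g$, the same $*^n$-extension gives an unambiguous collection for $f$ of the same complexity, since a unique $C \in \mathcal{C}$ matching $x$ extends to a unique $C'$ matching $(x, y)$. Applied to a subcube partition of $g$ (partitioning $\{0,1\}^n$), it yields a subcube partition of $f$ (partitioning $\{0,1\}^{2n}$ as a product with $\{0,1\}^n$), so $UC(f) \leq UC(g)$ and $SC(f) \leq SC(g)$.

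For part (2), I fix $x \in \{0,1\}^n$ and disjoint sensitive blocks $B_1, \ldots, B_t \subseteq [n]$ for $g$ witnessing $bs(g, x) = t$. I consider the slice input $(x, \bar{x}) \in \binom{[2n]}{n}$ (valid since $|x| + |\bar{x}| = n$) and lift each block to $B_i^\star := B_i \cup \{j + n : j \in B_i\} \subseteq [2n]$. Flipping $B_i^\star$ in $(x, \bar{x})$ produces $(x^{B_i}, \overline{x^{B_i}})$, which still lies in the slice (the flips on $x$ and on $\bar{x}$ cancel in the total weight) and satisfies $f(x^{B_i}, \overline{x^{B_i}}) = g(x^{B_i}) \neq g(x) = f(x, \bar{x})$. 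The $B_i^\star$ are disjoint because the $B_i$ are, giving $bs(f) \geq t = bs(g)$. The same argument with $B_i = \{j_i\}$ a singleton lifts to the $2$-block $\{j_i, j_i + n\}$, which has one $0$ and one $1$ since $x_{j_i}$ and $\bar{x}_{j_i}$ differ, and so is a valid sensitive $2$-block on the slice; hence $s(f) \geq s(g)$. There is no real obstacle here; the only point requiring care is that the complement lift preserves membership in the slice and genuinely flips $f$'s value, both of which are immediate from $f(u, v) = g(u)$ and the definition of $\bar{x}$.
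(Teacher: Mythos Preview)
Your proof is correct and follows essentially the same approach as the paper. The only noteworthy difference is in part (2): you lift sensitive blocks at the particular slice input $(x,\bar{x})$, whereas the paper proves the pointwise inequality $bs(f,(x,y)) \geq bs(g,x)$ at \emph{every} slice input $(x,y)$ by pairing the $0$'s of $x$ with the $1$'s of $y$ (and vice versa) and mirroring each block through this pairing. Your choice $(x,\bar{x})$ is precisely the special case where the pairing is $i \leftrightarrow i$; it is slightly simpler and already suffices for the lemma, while the paper's version yields the marginally stronger pointwise statement.
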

\begin{proof}
\begin{enumerate}
\item Any query algorithm for $g$ gives a corresponding algorithm for $f$. The same thing holds for polynomials.
\item We will show that for any $(x, y)$, $s(f, (x, y)) \geq s(f, x)$ and $bs(f, (x, y)) \geq bs(f, x)$. First pair up the $0$s in $x$ with the $1$s in $y$ and the $1$s in $x$ with the $0$s in $y$. This can be done since both are of $n$ bits and their weights sum to $n$. So for each $i \in [n]$ if $x_i$ is paired up with $y_{i'}$, we have $x_i \neq y_{i'}$.

Consider disjoint sensitive blocks $b_1, b_2, \dots, b_k$ witnessing $bs(g, x) = k$. For each $i \in [k]$, let $b_i'$ be the block containing the corresponding indices of $y$: $b_i' = \{j' \mid j \in b_i\}$. Then $(x^{b_i}, y^{b_i'})$ is balanced and $f(x, y) \neq f(x^{b_i}, y^{b_i'})$. So for each sensitive block of $x$ for $g$, there is a corresponding block of $(x, y)$ which is sensitive for $f$. So $bs(f) = \max_{(x, y)} bs(f, (x, y)) \geq \max_{(x, y)} bs(g, x) = bs(g)$. By the same argument, we have $s(f) \geq s(g)$.

\item A $g$-certificate $c$ for any $x \in \{0,1\}^n$ gives an $f$-certificate $(c, *^n)$ of $(x, y)$ for every $y \in \{0,1\}^n$. The collection of such certificates for $f$ obtained from unambiguous certificates of $g$ also forms a subcube partition of $\{0,1\}^{2n}$ computing $f$. \qedhere
\end{enumerate}
\end{proof}

The other direction, for most measures, relies on the observation that the balanced slice contains all $(x, \bar{x})$ for $x \in \{0,1\}^n$, where $\bar{x}$ is the bitwise complement of $x$, i.e. $\bar{x}_i = 1-x_i$ for each $i \in [n]$. So for any given $x \in \{0,1\}^n$, one can pretend that the given input is $(x, \bar{x})$ and use the algorithm for $f$ to compute $g(x)$.

\begin{lemma} For any $g: \{0,1\}^n \rightarrow \{0,1\}$ and $f := f_g$, 
$D(f) \geq D(g)$, $R(f) \geq R(g)$, $R_0(f) \geq R_0(g)$, $Q(f) \geq Q(g)$, $Q_E(f) \geq Q_E(g)$, $\deg(f) \geq \deg(g)$, $\adeg(f) \geq \adeg(g)$.
\end{lemma}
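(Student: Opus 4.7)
The plan is to exploit the fact that the balanced slice $\binom{[2n]}{n}$ contains every pair $(x, \bar{x})$ with $x \in \{0,1\}^n$, and that $f(x, \bar{x}) = g(x)$ by definition of $f = f_g$. Thus any algorithm or polynomial that is correct for $f$ on the balanced slice is, in particular, correct on inputs of the form $(x, \bar{x})$, and can be specialized to an object for $g$ of no larger complexity.

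For the query-complexity measures $D, R, R_0, Q, Q_E$, I would take an optimal algorithm $\mathcal{A}$ for $f$ and convert it into an algorithm $\mathcal{A}'$ for $g$ that simulates $\mathcal{A}$ as if its input were $(x, \bar{x})$, using the $g$-oracle for $x$. A query of $\mathcal{A}$ at position $i \in [n]$ is answered by a single query of $\mathcal{A}'$ at position $i$; a query of $\mathcal{A}$ at position $n + i$ is answered by querying $x_i$ and returning $1 - x_i$. In the classical deterministic and randomized models, caching previously queried indices makes this cost at most one $g$-query per $\mathcal{A}$-query, so the worst-case depth and the Las Vegas/bounded-error costs of $\mathcal{A}'$ do not exceed those of $\mathcal{A}$. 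In the quantum models, a single reversible call to the $g$-oracle implements any call to the $f$-oracle: a query to $y_i$ is a query to $x_i$ with the answer register conjugated by an $X$ gate, so the exact number of oracle calls is preserved. Correctness of $\mathcal{A}'$ on $g$ follows from correctness of $\mathcal{A}$ on the inputs $(x, \bar{x})$, which lie in the balanced slice.

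For the polynomial measures, let $p(x_1, \dots, x_n, y_1, \dots, y_n)$ be a polynomial representing (respectively, $\tfrac{1}{3}$-approximating) $f$ on the balanced slice. The polynomial $q(x) := p(x_1, \dots, x_n,\, 1 - x_1, \dots,\, 1 - x_n)$ has total degree at most $\deg(p)$ and satisfies $q(x) = p(x, \bar{x}) = f(x, \bar{x}) = g(x)$ for every $x \in \{0,1\}^n$, with the analogous approximate identity $|q(x) - g(x)| \leq \tfrac{1}{3}$ in the approximate case. Taking $p$ of minimal degree yields $\deg(g) \leq \deg(f)$ and $\adeg(g) \leq \adeg(f)$.

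I do not anticipate a serious obstacle, since every inequality reduces to the same ``restrict to $(x, \bar{x})$'' trick. The only point that warrants mild care is the quantum simulation: one must check that a $y_i$-query can be implemented with exactly one call of the $x$-oracle together with oracle-independent gates, so that the number of oracle calls is preserved exactly rather than merely up to a factor of two. This is routine but is the one place where the argument is not just a direct substitution.
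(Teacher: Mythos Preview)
Your proposal is correct and follows essentially the same approach as the paper: restrict to inputs of the form $(x,\bar{x})$, simulate an optimal $f$-algorithm by answering a $y_i$-query via a single $x_i$-query with the bit flipped, and for degree substitute $y_i \mapsto 1-x_i$ in a representing or approximating polynomial. Your explicit remark on implementing the quantum $y_i$-oracle by conjugating the $x_i$-oracle with $X$ gates is a bit more careful than the paper's one-line ``essentially the same'' treatment, but the underlying idea is identical.
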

\begin{proof}
We show that any decision tree for $f$ also gives a decision tree for $g$. Given a decision tree for $f$, follow it in the natural way if  at a node labelled with some $x_i$. If the node is labelled with $y_i$, query $x_i$, which is say $b$, and follow the path for $y_i = 1-b$. Continue like this until a leaf is reached which gives the output. The arguments for other query complexity measures are essentially the same.

Given a polynomial $p(x, y)$ representing $f$, we obtain a polynomial representing $g$ by considering $q(x) := p(x, \bar{x})$. Put another way, substitute $y_i := 1-x_i$ for all $i \in [n]$. Since an affine substitution cannot increase the degree, we get $\deg(f) \geq \deg(g)$. Similarly $\adeg(f) \geq \adeg(g)$.
\end{proof}

\begin{lemma}
For any $g: \{0,1\}^n \rightarrow \{0,1\}$ and $f := f_g$, $bs(f) \leq bs(g)$.
\end{lemma}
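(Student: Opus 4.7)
The plan is to take $k$ disjoint sensitive blocks $B_1,\ldots,B_k$ of $f$ at some balanced input $(x,y)$ and project them down to sensitive blocks of $g$ at $x$. Concretely, for each $i$ split $B_i = B_i^x \sqcup B_i^y$ along the $[n] \sqcup [n]$ decomposition of $[2n]$. The key identity is
\[
f(x^{B_i^x}, y^{B_i^y}) = g(x^{B_i^x}),
\]
so sensitivity of $B_i$ for $f$ at $(x,y)$ is equivalent to $g(x^{B_i^x}) \neq g(x)$. In particular this forces $B_i^x \neq \emptyset$, since otherwise $g(x^{B_i^x}) = g(x)$ gives a contradiction.

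Next I would observe that the $B_i^x$ are automatically pairwise disjoint because the $B_i$ are, and that each $B_i^x$ is by the previous paragraph a genuine sensitive block of $g$ at $x$. (Note that I do not need any constraint on the $B_i^y$ parts; the balancedness of $B_i$, which is required so that $(x^{B_i^x}, y^{B_i^y})$ stays on the slice, is used only to guarantee that the flipped input lies in the domain of $f$, not to derive anything about $g$.) Hence $bs(g,x) \geq k$, and taking the maximum over $(x,y)$ yields $bs(g) \geq bs(f)$.

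There is really no obstacle here; the argument is essentially a bookkeeping one, the only subtle point being to notice that a sensitive block of $f$ cannot be entirely contained in the $y$-half, which is immediate from $f(x,y) = g(x)$. Combined with the reverse inequality $bs(f) \geq bs(g)$ from the previous lemma, this gives $bs(f) = bs(g)$ as asserted in Theorem \ref{thm:lift}.
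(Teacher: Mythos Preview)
Your proposal is correct and follows essentially the same approach as the paper's proof: restrict each sensitive block $B_i$ of $f$ at $(x,y)$ to its $x$-part $B_i^x$, observe that $B_i^x$ is nonempty because $f$ ignores $y$, and conclude that the $B_i^x$ form disjoint sensitive blocks of $g$ at $x$. The paper's argument is identical in structure, just with slightly less explicit notation for the key identity $f(x^{B_i^x},y^{B_i^y})=g(x^{B_i^x})$.
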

\begin{proof}
Pick some $(x, y)$ such that $bs(f, (x, y)) = bs(f)$. Let $l := bs(f)$. Let $b_1$, $b_2$, $\dots$, $b_l$ be disjoint blocks on which $(x, y)$ is sensitive. Consider the restrictions $b_1', b_2', \dots, b_l'$ of these blocks to only the indices in $x$. We claim that $x$ is sensitive on each of these blocks for the function $g$. 

First note that each $b_i'$ is non-empty, since no block contained entirely in $y$ can be a sensitive block by the definition of $f$. Now it is easy to see that $x$ must be sensitive (for the function $g$) on each $b_i'$ since otherwise $b_i$ would not be a sensitive block of $(x, y)$. So we have $bs(g) \geq bs(g, x) \geq bs(f)$.
\end{proof}

\begin{lemma}
For any $g: \{0,1\}^n \rightarrow \{0,1\}$ and $f := f_g$, $s(f) \leq bs_2(g) \leq 2s(g)^2$.
\end{lemma}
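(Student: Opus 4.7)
The plan is to split the chain into $s(f) \leq bs_2(g)$ and $bs_2(g) \leq 2s(g)^2$. The second inequality is a standard bound relating $l$-block sensitivity to sensitivity, due to Kenyon and Kutin; I will simply invoke it. The first inequality is the new content and follows the same ``project from the slice to the hypercube'' strategy used earlier in the section.

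To show $s(f) \leq bs_2(g)$, let $(x, y) \in \binom{[2n]}{n}$ attain $s(f)$, and let $B_1, \ldots, B_k \subseteq [2n]$ with $k = s(f)$ be disjoint sensitive $2$-blocks of $f$ at $(x, y)$. By the definition of slice sensitivity recalled in Section~\ref{sec:prel}, each $B_i$ has size exactly $2$ and contains one index where $(x, y)$ is $0$ and one where it is $1$. The key observation is that no $B_i$ is contained entirely in the $y$-coordinates $\{n+1, \ldots, 2n\}$: since $f(x, y) = g(x)$ depends only on the first $n$ bits, flipping bits only inside the $y$-part cannot change $f$, contradicting sensitivity of $B_i$. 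Hence the projection $B_i' := B_i \cap [n]$ is nonempty of size $1$ or $2$. Moreover,
\[
g(x^{B_i'}) = f\bigl((x, y)^{B_i}\bigr) \neq f(x, y) = g(x),
\]
so $B_i'$ is a sensitive $\leq 2$-block of $g$ at $x$ (viewed on the hypercube). Disjointness of $B_1, \ldots, B_k$ in $[2n]$ immediately yields disjointness of $B_1', \ldots, B_k'$ in $[n]$, giving $bs_2(g) \geq bs_2(g, x) \geq k = s(f)$.

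The two small subtleties at the start of this argument—that a slice-sensitive $2$-block has exactly one $0$ and one $1$, and that such a block cannot lie entirely inside the $y$-part—are both immediate from the definitions and the structure of $f = f_g$. The real work in the full chain sits in the Kenyon--Kutin bound $bs_2(g) \leq 2s(g)^2$, which I plan to cite rather than reprove; that proof partitions the witnessing blocks at the sensitivity-maximizer $x$ into size-$1$ blocks (sensitive bits of $g$ at $x$), size-$2$ blocks that contain a bit sensitive at $x$, and ``clean'' size-$2$ blocks $\{a,b\}$ with neither endpoint sensitive at $x$, and then controls the clean blocks by a charging argument to sensitive bits at the Hamming neighbors $x^{\{a\}}, x^{\{b\}}$.
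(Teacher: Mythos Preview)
Your proof is correct and follows essentially the same approach as the paper: project each sensitive slice $2$-block onto the $x$-coordinates, note it cannot lie entirely in the $y$-part since $f$ ignores $y$, and obtain disjoint sensitive $\leq 2$-blocks of $g$ at $x$; the second inequality is likewise just cited as the Kenyon--Kutin bound. Your write-up is slightly more explicit (e.g., the displayed equality $g(x^{B_i'}) = f((x,y)^{B_i})$), but the argument is the same.
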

\begin{proof}
The second inequality is the Kenyon-Kutin bound \cite{kenyon2004sensitivity}.

For the first inequality, pick an $(x, y)$ with $l := s(f) = s(f, (x, y))$. Let $b_1$, $b_2$, $\dots$, $b_l$ be disjoint sensitive blocks of $(x, y)$ each of size $2$. Observe that each block either lies entirely in $x$, or contains a bit in $x$ and the other in $y$. As in the proof of the previous proposition, consider the restrictions $b_1'$, $b_2'$, $\dots$, $b_l'$ to the variables in $x$. These have size at most $2$ and $x$ is sensitive (for $g$) on each of these. Therefore $bs_2(g) \geq s(f)$.
\end{proof}

This relation is tight up to multiplicative constants by the following variant of the Rubinstein function \cite{rubinstein1995sensitivity}. Let $h(z)$ be the function on $2k = \sqrt{n}$ bits defined in the following way: $h(z) = 1$ if $z$ is of the form $(01)^i 10 (01)^{k-(i+1)}$ for some $0 \leq i < k$. Define $g = \OR_{\sqrt{n}} \circ h$ where $\circ$ denotes composition. In other words, $g(z_1, z_2, \dots, z_{\sqrt{n}})$ is $1$ only when some $z_j$ is of the form $(01)^i 10 (01)^{k-(i+1)}$. The sensitivity of this function is $\sqrt{n}$. (The analysis is the same as that for the Rubinstein function.) On the other hand, consider $(z^{\sqrt{n}}, y)$ where $z = (01)^{k}$ and $y$ is any balanced string on $n$ bits. This input has sensitivity at least $\frac{n}{2}$ for $f_g$. Flipping any $01$ in any copy of $z$ converts this $0$-input into a $1$-input.

\begin{lemma} For any $g: \{0,1\}^n \rightarrow \{0,1\}$ and $f := f_g$, 
$C(g) \leq C(f)$.
\end{lemma}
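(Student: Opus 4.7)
The plan is to take a minimum $f$-certificate at the ``canonical'' balanced lift $(x,\bar x)$ of $x$ and fold its $y$-part back onto the $x$-coordinates using complementation. The key observation is that, because $(x,\bar x)$ lies in the balanced slice, any restriction of $y$-variables can be translated into a restriction of $x$-variables of no larger total size.

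Concretely, fix $x\in\{0,1\}^n$ and let $C\in\{0,1,*\}^{2n}$ be a minimum $f$-certificate consistent with $(x,\bar x)$, so $|C|=C(f,(x,\bar x))\le C(f)$. Define $C'\in\{0,1,*\}^n$ coordinate-wise by
\[
C'_i \;=\; \begin{cases} C(x_i) & \text{if } C(x_i)\neq *,\\ 1-C(y_i) & \text{if } C(x_i)=* \text{ and } C(y_i)\neq *,\\ * & \text{otherwise.} \end{cases}
\]
This is well-defined because, whenever both $C(x_i)$ and $C(y_i)$ are set, consistency of $C$ with $(x,\bar x)$ forces $C(x_i)=x_i$ and $C(y_i)=1-x_i$, so the two rules agree. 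Moreover each non-$*$ entry of $C'$ is charged to at least one non-$*$ entry of $C$ (and pairs of entries $C(x_i),C(y_i)$ that are both set collapse to a single entry of $C'$), so $|C'|\le|C|$.

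Next I verify that $C'$ is a $g$-certificate consistent with $x$. Consistency with $x$ is immediate from the definition. For correctness, take any $x'\in\{0,1\}^n$ consistent with $C'$ and consider $(x',\overline{x'})$, which lies in the balanced slice. Checking coordinate by coordinate shows $(x',\overline{x'})$ is consistent with $C$: if $C(x_i)\neq *$ then $x'_i=C'_i=C(x_i)$; if only $C(y_i)\neq *$ then $\overline{x'}_i=1-x'_i=1-C'_i=C(y_i)$; and if neither is set there is nothing to check. Hence $g(x')=f(x',\overline{x'})=f(x,\bar x)=g(x)$, so $C'$ certifies the value $g(x)$. Therefore $C(g,x)\le|C'|\le|C|\le C(f)$, and taking a maximum over $x$ yields $C(g)\le C(f)$.

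There is no real obstacle beyond keeping the bookkeeping straight: the only subtlety is to ensure that the two clauses in the definition of $C'_i$ never conflict (handled by $(x,\bar x)$-consistency) and that the size bound accounts correctly for the case where $C$ sets both $C(x_i)$ and $C(y_i)$ (which only helps us).
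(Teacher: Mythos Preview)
Your proof is correct, but it takes a different route from the paper. The paper works at an \emph{arbitrary} balanced point $(x,y)$: given an $f$-certificate $(C_x,C_y)$ there, it pads $C_x$ with enough extra bits of $x$ so that the resulting $C_x'$ has at least as many $1$'s as $C_y$ has $0$'s and at least as many $0$'s as $C_y$ has $1$'s, and then for each $x'$ consistent with $C_x'$ it chooses some $y'$ of the correct weight consistent with $C_y$. By contrast, you specialize to the antipodal lift $(x,\bar x)$ and fold the $y$-constraints back onto the $x$-coordinates via complementation, which lets you exhibit the balanced witness $(x',\overline{x'})$ directly with no padding or counting. Your argument is shorter and more transparent; the paper's argument is slightly stronger in that it yields the pointwise inequality $C(g,x)\le C(f,(x,y))$ for \emph{every} balanced $y$, not just $y=\bar x$, though for the lemma as stated this extra generality is not needed.
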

\begin{proof}
We show how to convert an $f$-certificate of $(x, y)$ into a $g$-certificate of $x$ without increasing the size of the certificate. Let $C = (C_x, C_y)$ be a certificate of $(x, y)$. Let $c_0$ and $c_1$ be the number of $0$'s and $1$'s respectively in $C_y$. Similarly let $d_0$ and $d_1$ be the number of $0$'s and $1$'s respectively in $C_x$.
If $c_0 > d_1$, arbitrarily pick $c_0 - d_1$ indices $i$ in $x$ where $x_i = 1$ which are not already in $C_x$ and add them to $C_x$. There must exist at least $c_0$ $1$'s in $x$ since otherwise there would be more than $n/2$ $0$'s in $(x, y)$. Similarly if $c_1 > d_0$, arbitrarily pick $c_1 - d_0$ indices in $x$ where $x_i = 0$ and add them. Call this $C_x'$. We claim that $C_x'$ is a $g$-certificate of $x$.

Consider any $x'$ consistent with $C_x'$. We will find $y'$ such that $(x', y')$ is consistent with $C$. Then we will have $g(x') = f(x', y') = f(x, y) = g(x)$. Simply pick $y'$ to be any string with weight $n/2 - |x'|$ which is consistent with $C_y$. To see that such a string exists, we just need to check that $x'$ contains at most $(n/2 - c_0)$ $0$'s and at most $(n/2 - c_1)$ $1$'s. This follows from the fact that $C_x'$ contains at least $c_0$ $1$'s and at least $c_1$ $0$'s. So we have a certificate $C_x'$ of $x$ with $|C_x'| \leq |C|$.

Finally $C(g) = \max_x C(g, x) \leq \max_{(x, y)} C(f, (x, y)) = C(f)$.
\end{proof}

\begin{lemma} For any $g: \{0,1\}^n \rightarrow \{0,1\}$ and $f := f_g$, 
$UC(g) \leq 4UC(f)^2$, $SC(g) \leq 4SC(f)^2$.
\end{lemma}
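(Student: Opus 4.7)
The plan is to reduce both inequalities to the single bound $SC(g) \leq SC(f)$, which combined with the balanced-slice relation $SC(f) \leq 4\,UC(f)^2$ noted earlier in the paper (via the results of Dafni et al.) and the identity $UC(g) = SC(g)$ for total functions on the hypercube, yields both claims. Explicitly, $UC(g) = SC(g) \leq SC(f) \leq 4\,UC(f)^2$ gives the first inequality, and $SC(g) \leq SC(f) \leq 4\,SC(f)^2$ gives the second (the case $SC(f) = 0$ is trivial because then $g$ is constant).

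To establish $SC(g) \leq SC(f)$, I would exploit the embedding $x \mapsto (x, \bar{x})$ of $\{0,1\}^n$ into the balanced slice $\binom{[2n]}{n}$, where $\bar{x}$ is the bitwise complement. Starting from an optimal subcube partition $\mathcal{C}$ of $\{0,1\}^{2n}$ realizing $SC(f)$, for each $x \in \{0,1\}^n$ the balanced point $(x, \bar{x})$ lies in a unique subcube $C = (C_x, C_y) \in \mathcal{C}$, whose label must equal $f(x, \bar{x}) = g(x)$. I project $C$ to a restriction $C'$ of $\{0,1\}^n$ by keeping every constraint in $C_x$ and, for every $i$ with $C_y(i) \neq *$, adding the constraint $x_i = 1 - C_y(i)$.

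The key equivalence, immediate from $y = \bar{x}$, is that $x$ is consistent with $C'$ if and only if $(x, \bar{x})$ is consistent with $C$. This single observation delivers everything at once: each $x$ lies in exactly one projected subcube (because $\mathcal{C}$ partitions $\{0,1\}^{2n}$ and $(x, \bar{x})$ lies in exactly one member), the label $g(x)$ attached to that $C'$ is correct, and $|C'| \leq |C_x| + |C_y| = |C| \leq SC(f)$. Hence the projected collection is a subcube partition of $\{0,1\}^n$ computing $g$ of complexity at most $SC(f)$, giving $SC(g) \leq SC(f)$.

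The only mildly delicate point, which I expect to be the main obstacle, is the overlap case in the projection: when $C_x$ and the ``flipped'' $C_y$ constrain the same index $i$ to different values. But in such a case no balanced input of the form $(x, \bar{x})$ is consistent with the original $C$, so the corresponding $C'$ governs no input and can be discarded from the collection without affecting coverage or disjointness. Everything else---the size bound, the pairwise disjointness of the projected subcubes, and the final chain of inequalities---is routine.
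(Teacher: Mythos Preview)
Your approach is correct but genuinely different from the paper's. The paper routes through decision-tree depth: it chains $UC(g) \leq D(g) = D(f) \leq 4C(f)^2 \leq 4UC(f)^2$, invoking the Dafni et al.\ bound $D \leq 4C^2$ on the balanced slice, and argues ``similarly'' for $SC$. You instead prove the sharper statement $SC(g) \leq SC(f)$ directly, by projecting an optimal subcube partition of $\{0,1\}^{2n}$ along the embedding $x \mapsto (x,\bar x)$; the equivalence ``$x$ is consistent with $C'$ iff $(x,\bar x)$ is consistent with $C$'' handles coverage, disjointness, label correctness (since every $(x,\bar x)$ lies in the domain of $f$), and the size bound simultaneously, and you correctly discard the inconsistent projections. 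Combined with the paper's earlier lemma $SC(f) \leq SC(g)$, your argument in fact yields $SC(f_g) = SC(g)$ for every $g$---which the paper explicitly leaves open, remarking that it has ``not been able to find a function $g$ with $SC(f_g) < SC(g)$''. So your route buys strictly more for $SC$; for $UC$ you still need the Dafni et al.\ relation $SC(f) \leq 4UC(f)^2$, as you note, so no improvement there over the stated lemma.
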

\begin{proof}
The first inequality follows from combining $UC(g) \leq D(g) \leq D(f) \leq 4C(f)^2$ and $C(f) \leq UC(f)$. $D(f) \leq 4C(f)^2$ follows from the results of Dafni et al. \cite{dafni2021complexity}. The second inequality is similar.
\end{proof}

This bound can quite possibly be improved. We have not been able to find a function $g$ with $SC(f_g) < SC(g)$ or even $UC(f_g) < UC(g)$ so far.

Putting the above lemmas together, we have the following.
\begin{theorem}
For any $g: \{0,1\}^n \rightarrow \{0,1\}$ and $f := f_g$, 
\begin{enumerate} 
\itemsep -0.25em
\item $D(f) = D(g)$, $R(f) = R(g)$, $R_0(f) = R_0(g)$, $Q(f) = Q(g)$, $Q_E(f) = Q_E(g)$,
\item $\deg(f) = \deg(g)$, $\adeg(f) = \adeg(g)$,
\item $bs(f) = bs(g)$, $s(g) \leq s(f) \leq bs_2(g) \leq 2s(g)^2$,
\item $C(f) = C(g)$, $UC(f) \leq UC(g) \leq 4UC(f)^2$, $SC(f) \leq SC(g) \leq 4SC(f)^2$.
\end{enumerate}
\end{theorem}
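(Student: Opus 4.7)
The plan is to assemble the four lemmas proved just above into the single theorem statement; there is essentially no new work beyond bookkeeping, so I will spend the proposal matching each inequality in the theorem to the lemma that supplies it. The proof will be a short sequence of citations: for every measure $M \in \{D, R, R_0, Q, Q_E, \deg, \adeg\}$, the first lemma of the section gives $M(f) \leq M(g)$ and the second lemma (the one using the involution $y_i := 1 - x_i$, or equivalently inputs of the form $(x, \bar{x})$) gives $M(f) \geq M(g)$, so combining them yields the equalities in items 1 and 2 of the theorem.

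For item 3, the equality $bs(f) = bs(g)$ comes from pairing $bs(f) \geq bs(g)$ (first lemma, part 2, via the 0--1 pairing of indices between $x$ and $y$) with $bs(f) \leq bs(g)$ (the lemma whose proof restricts a sensitive block of $(x,y)$ to its $x$-coordinates and observes that this restriction is still sensitive for $g$). The chain $s(g) \leq s(f) \leq bs_2(g) \leq 2 s(g)^2$ then comes from the first lemma (for $s(g) \leq s(f)$), from the lemma that restricts a size-$2$ sensitive block of $(x,y)$ to its $x$-coordinates and produces a sensitive block of size at most $2$ for $g$ (giving $s(f) \leq bs_2(g)$), and finally from the Kenyon--Kutin bound $bs_2(g) \leq 2 s(g)^2$ quoted there.

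For item 4, the equality $C(f) = C(g)$ comes from combining $C(f) \leq C(g)$ (first lemma, part 3, via the certificate $(c, *^n)$) with $C(g) \leq C(f)$ (the lemma that adapts an $f$-certificate of $(x,y)$ into a $g$-certificate of $x$ by padding with the right number of $0$'s and $1$'s from $x$ so that any consistent $x'$ admits a completion $y'$ with $(x', y')$ of weight $n$). For $UC$ and $SC$, the first lemma gives $UC(f) \leq UC(g)$ and $SC(f) \leq SC(g)$, and the last lemma, which routes through $D(f) \leq 4 C(f)^2$ from Dafni et al.\ \cite{dafni2021complexity}, supplies $UC(g) \leq 4 UC(f)^2$ and $SC(g) \leq 4 SC(f)^2$.

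Since all substantive arguments (the involution trick for decision trees and polynomials, the block-restriction argument for $bs$ and $s$, the completion-of-certificates argument for $C$, and the appeal to the polynomial relation of Dafni et al.\ for $UC$ and $SC$) have already been handled, there is no real obstacle; the theorem is proved in one line by citing the four preceding lemmas and recording which gives each half of each (in)equality. The only place where looseness remains is item 4 for $UC$ and $SC$, where we do not expect the quadratic blow-up to be tight, as the concluding remarks point out, but this is not part of the theorem statement to be proved.
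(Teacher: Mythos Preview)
Your proposal is correct and matches the paper's approach exactly: the paper's proof is literally the one-line ``Putting the above lemmas together, we have the following,'' and you have correctly traced each (in)equality back to the lemma that supplies it. The only slip is that you say ``four lemmas'' when in fact there are six (the first omnibus lemma, the $(x,\bar x)$ lemma, the $bs(f)\le bs(g)$ lemma, the $s(f)\le bs_2(g)$ lemma, the $C(g)\le C(f)$ lemma, and the $UC/SC$ lemma), but your detailed breakdown references all six correctly, so this is purely a counting typo.
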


The above relations imply that any separation for the hypercube not involving $s(f)$ and $UC(f)$ can be directly used to give a separation for the balanced slice. Separations of the form $s(f) \geq M(f)^c$ and $M(f) \geq UC(f)^c$ can still be obtained from those on the hypercube.

Though we cannot use the above relations to prove the separation $bs(f) \geq \Omega(s(f)^2)$ on the balanced slice, we still have it by considering the restriction of the Rubinstein function to the balanced slice.

\begin{prop}
There is a function $f$ on the balanced slice with $bs(f) \geq s(f)^2/16$.
\end{prop}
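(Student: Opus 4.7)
The plan is to take $f$ to be the restriction of the classical Rubinstein function to the balanced slice and verify the separation by a direct analysis of $bs(f)$ and $s(f)$.

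Let $n = k^2$ with $k$ even and view each input of $g : \{0,1\}^n \to \{0,1\}$ as a $k \times k$ Boolean matrix. Call a row a \emph{$1$-pattern} if it contains exactly two ones, located at positions $(2j-1, 2j)$ for some $j \in [k/2]$, with all other bits of the row zero. Set $g(x) = 1$ iff some row of $x$ is a $1$-pattern, and let $f = g|_{\binom{[n]}{n/2}}$.

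For the block sensitivity bound $bs(f) \geq n/4$, take $x^*$ to be the balanced input whose first $k/2$ rows are $0^k$ and last $k/2$ rows are $1^k$, so that $g(x^*) = 0$. For each $(i, j) \in [k/2] \times [k/2]$, form a size-$4$ block that flips positions $(2j-1, 2j)$ of row $i$ (making row $i$ a $1$-pattern) and flips two extra bits in the all-ones rows to preserve the Hamming weight. The $k^2/4$ pattern-creating pairs are automatically disjoint across $(i, j)$, and the $k^2/2$ bits available in the all-ones region are exactly enough to supply two distinct companion bits per block. Each such block is sensitive, yielding $n/4$ disjoint sensitive balanced blocks.

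For the sensitivity bound $s(f) \leq 2k$, fix any balanced $y$ and a maximal collection of disjoint sensitive $2$-block swaps. Split by cases on $g(y)$. If $g(y) = 0$, charge each sensitive swap to the row that becomes a $1$-pattern. A within-row swap preserves the row's Hamming weight, so the target row must already have exactly two ones; moving either of the two ones to the unique partner slot of the other gives the only sensitive within-row swaps, at most two disjoint per row. A cross-row swap alters a single bit in each of two rows, so the target row must have had one or three ones, with exactly one compatible flip position; such swaps share that target position and thus contribute at most one disjoint representative per row. Summing over the at most $k$ charged rows yields $s(f, y) \leq 2k$. If $g(y) = 1$, let $r$ be the number of $1$-pattern rows: $r \geq 3$ forces $s(f, y) = 0$ since a single swap touches at most two rows, while $r \in \{1, 2\}$ forces every sensitive swap to modify a bit in one of the at most $2k$ positions of the $1$-pattern rows, so $s(f, y) \leq 2k$ again.

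Combining gives $s(f) \leq 2\sqrt{n}$ and hence $s(f)^2/16 \leq n/4 \leq bs(f)$, as claimed. The main obstacle is the sensitivity case analysis: one must carefully track how each sensitive swap is charged to a row and verify the "at most two disjoint sensitive swaps per charged row" bound when a single row could simultaneously be the target of within-row sensitive swaps and a source for cross-row sensitive swaps targeting other rows.
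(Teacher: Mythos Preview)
Your proposal is correct and follows essentially the same route as the paper: restrict Rubinstein's function to the balanced slice, exhibit the half-zeros/half-ones input to get $bs(f)\ge n/4$, and bound $s(f)\le 2\sqrt{n}$ by a case analysis on whether the input is a $0$- or $1$-input. Your write-up is in fact a bit more explicit than the paper's---you spell out the companion bits needed to keep each sensitive block balanced, and you phrase the $0$-input sensitivity bound as a clean charging argument (each swap charged to the row it turns into a $1$-pattern). The worry you flag at the end is not an actual gap: a row that serves as a source for a cross-row swap contributes that swap to the \emph{other} row's charge, so there is no double-counting; your per-row cap of $2$ (weight-$2$ rows) or $1$ (weight-$1$ or $3$ rows) is exactly what the paper uses.
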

\begin{proof}
Recall that Rubinstein's function $f$ \cite{rubinstein1995sensitivity} on input $(z_1, z_2, \dots, z_{\sqrt{n}})$ is $1$ if there is an $i \in [\sqrt{n}]$ such that $z_i$ is of the form $(00)^i11(00)^{k-i-1}$ where $k = \sqrt{n}/2$. Then on the balanced slice $bs(f) \geq n/4$ by considering the input with $z_i = 0^{\sqrt{n}}$ for $1 \leq i \leq \sqrt{n}/2$, and $z_j = 1^{\sqrt{n}}$ for $\sqrt{n}/2 + 1 \leq j \leq \sqrt{n}$.

Now we bound $s(f)$ from above. Any $1$-input with a sensitive $2$-block can have at most two $z_i$'s of the form $(00)^i11(00)^{k-i-1}$. Since any such sensitive block with a $0$ and a $1$ must contain an index from every such  $z_i$, there can be at most $\sqrt{n}$ such disjoint sensitive blocks. So the $1$-sensitivity of $f$ can be at most $\sqrt{n}$. 

Now consider a $0$-input $x = (z_1, z_2, \dots, z_{\sqrt{n}})$ and a collection of disjoint sensitive $2$-blocks. We show that any $z_i$ can contribute at most two sensitive $2$-blocks. Here by $z_i$ contributing a sensitive $2$-block we mean  that flipping the block changes $z_i$ to make it of the form $(00)^i11(00)^{k-i-1}$. If $|z_i| \notin \{1,2,3\}$, then flipping a $2$-block cannot change $z_i$ to $(00)^i11(00)^{k-i-1}$. If $|z_i| = 2$, there are exactly two sensitive $2$-blocks in $z_i$. If $|z_i| \in \{1,3\}$, there is at most one sensitive $2$-block which transforms $z_i$ appropriately and this sensitive block intersects $z_i$ in one bit. So the $0$-sensitivity is at most $2\sqrt{n}$.
\end{proof}

\section{Concluding remarks}\label{sec:remarks}
The idea used for the maximum query complexity on the weight-2 slice extends to higher constant weight slices but we no longer have tight bounds. In the known $k$-hypergraph Ramsey theorems, $n$ is upper bounded by roughly a tower of height $k-1$ ending with $m(G)$ (see, e.g., \cite{graham1991ramsey}). Using this, one can show by induction on $k$ that the maximum query complexity on the weight-$k$ slice is at most $n - \Omega(\log^{(\binom{k}{2})} n)$ for each constant $k \geq 2$. In the other direction, \Erdos , Hajnal  and Rado \cite{erdHos1965partition} showed that there is a family of $3$-hypergraphs not containing monochromatic sets of size $\Omega(\sqrt{\log n})$. By the stepping-up lemma of \Erdos and Hajnal (see \cite{graham1991ramsey}), for every $k \geq 3$, there is a family of $k$-hypergraphs not containing any monochromatic sets of size $\Omega(\sqrt{\log^{(k-2)} n})$ (where $\log^{(1)} n = \log n$ and for all $i \geq 2$, $\log^{(i)} n = \log(\log^{(i-1)} n)$).

It would be interesting to improve the bounds for the maximum query complexity on the balanced slice, and find explicit functions requiring $n - o(\log n)$ queries. We now point out the connection between balanced certificates and daisies which were introduced in \cite{bollobas2011daisies}. An $(r, m, k)$-daisy $H$ is a $(k + r)$-uniform hypergraph  on $k + m$ vertices for which there is a partition $V(H) = K \cup M$ with $|K| = k$ and $|M| = m$ such that $H = \{K \cup P \mid P \in \binom{M}{r}\}$. $K$ is called the kernel of $H$. Recently Ramsey problems for daisies were studied by Pudl{\'a}k, R{\"o}dl and Sales \cite{pudlak2022ramsey, sales2022ramsey}.  A balanced certificate for $f :\binom{[n]}{n/2} \rightarrow \{0,1\}$ of size $l$ can be viewed as a monochromatic $(\frac{n-l}{2}, n-l, \frac{l}{2})$-daisy in the coloring of $\binom{[n]}{n/2}$ corresponding to $f$. We note however that \cite{pudlak2022ramsey, sales2022ramsey} study the Ramsey problem for daisies with unrestricted kernel size or fixed kernel size $k$. On the other hand, we are interested in the case where $r$ and $m := 2r$ are given but $k = \frac{n}{2}-r$, as done in Subsection \ref{subs:lll}.

One simple question that we have been unable to resolve is whether for every $n$, the maximum query complexity (weakly) increases with the weight $k$, as $k$ varies from $1$ to $n/2$.

Another set of questions comes from trying to understand the relations and separations between complexity measures of functions on the balanced slice. Through their general framework, Dafni et al. \cite{dafni2021complexity} obtained polynomial relations between such measures. However some of the ideas used for proving relations between measures for the hypercube do not extend directly to their general setting. Looking specifically at the balanced slice can possibly improve those relations.

\vspace{10pt}

\noindent\textbf{Acknowledgements:} I wish to thank Yufal Filmus and Rajat Mittal for helpful discussions and encouragement. I am especially grateful to Yuval Filmus for suggesting minimum balanced certificate complexity. I also thank Manaswi Paraashar for several suggestions that helped improve the writing of this manuscript. Aaronson's Boolean Function Wizard $\cite{bfw, aaronson2003algorithms}$ helped in performing computer experiments.

\bibliographystyle{alpha}
\bibliography{refs}

\appendix

\section{$\Weights_{n,m,k}$} \label{sec:weights}
We now examine whether the strategies described in Section \ref{sec:comp} for the problem $\Weights_{n, m, k}$ of computing the multiset of weights of $x_i$'s are best possible. We assume that the total weight $k \leq N/2 = nm/2$ as before. If $k=1$ or $k = 0$, the function is constant. So we will have $k \geq 2$ from here on. We drop the subscript $k$ in the following proposition since it does not appear in the bound.

\begin{prop}
For $2 \leq k \leq nm/2$, $D(\Weights_{n, m}) \geq \min\{nm-3, (n-1)m - 1, n(m-1)\} = \min\{(n-1)m - 1, n(m-1)\}$.
\end{prop}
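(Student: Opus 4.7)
The plan is an adversary argument. Suppose for contradiction there is a decision tree of depth $q \leq \min\{(n-1)m-1, n(m-1)\} - 1$ computing $\Weights_{n,m}$. I would follow a root-to-leaf path and exhibit two inputs consistent with the recorded answers that have distinct weight multisets, contradicting correctness. Write $u_i$ for the number of unqueried bits in string $x_i$ at the leaf and $a_i$ for the number of queried ones, and set $V = k - \sum_i a_i$. The consistent completions are parameterized by vectors $(v_1, \ldots, v_n)$ with $v_i \in [0, u_i]$ and $\sum_i v_i = V$, each inducing a weight multiset whose elements are $a_i + v_i$.

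The counting step combines both halves of the bound. From $q \leq (n-1)m - 2$ one obtains $\sum_i u_i \geq m+2$, and since $u_i \leq m$, at least two distinct strings $i \neq j$ satisfy $u_i, u_j \geq 1$. From $q \leq n(m-1) - 1$ one obtains $\sum_i u_i \geq n+1$, so by pigeonhole some string $i_0$ has $u_{i_0} \geq 2$. I would fix such an $i_0$ together with a distinct string $j$ having $u_j \geq 1$.

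Next I would produce two feasible $(v_l)$-vectors yielding distinct multisets. The basic move is the bit-swap $(v_{i_0}, v_j) \mapsto (v_{i_0}+1, v_j-1)$, which changes the pair of contributed weights from $(a_{i_0}+v_{i_0}, a_j+v_j)$ to $(a_{i_0}+v_{i_0}+1, a_j+v_j-1)$ while leaving the other $n-2$ entries fixed. When $a_{i_0} \neq a_j$, the new pair is not a reordering of the old, so the overall multiset changes. When $a_{i_0} = a_j = a$, the swap is symmetric and preserves the multiset; here I would exploit $u_{i_0} \geq 2$ to compare a configuration realizing the weight pair $(a+1, a+1)$ in coordinates $i_0, j$ with one realizing $(a, a+2)$, absorbing the compensating shift either in a third string with $u_l \geq 1$, or, if no such third string exists, within strings $i_0, j$ themselves using $u_{i_0} + u_j \geq m+2$.

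The main obstacle is the feasibility bookkeeping in the symmetric subcase: one has to route the compensating perturbation to a coordinate that still has room inside its box $[0, u_l]$ while maintaining $\sum_l v_l = V$. This splits into a short case analysis based on whether some third string has $u_l \geq 1$, and in both outcomes the counting inequalities $\sum_l u_l \geq m+2$ and $\sum_l u_l \geq n+1$ provide the slack needed for the perturbation to be valid.
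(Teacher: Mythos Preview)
Your outline has the right high-level shape, but there is a genuine gap: you never specify the adversary's answering strategy. You write ``follow a root-to-leaf path'' and then reason about $a_i$ and $V = k - \sum_i a_i$, but these quantities depend entirely on the answers given along the path. Nothing in your argument rules out $V \in \{0,1\}$ (or symmetrically $V \ge \sum_i u_i - 1$). If $V=0$ there is a unique consistent completion and no swap is possible; if $V=1$ your ``$(a+1,a+1)$ versus $(a,a+2)$'' comparison is infeasible. The paper's proof handles exactly this point: the adversary answers so that at all times at least two ones and at least two zeroes remain unrevealed, which is possible because $2 \le k \le nm/2$ and the algorithm is assumed to make at most $nm-4$ queries. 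Without an analogous guarantee, your feasibility bookkeeping cannot be completed.

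There is also a smaller error in your case split. The swap $\{p,q\} \mapsto \{p+1,q-1\}$ fixes the pair as a multiset precisely when $q = p+1$, i.e.\ when $a_j + v_j = a_{i_0} + v_{i_0} + 1$. This can certainly happen with $a_{i_0} \neq a_j$ (take $a_{i_0}=0$, $v_{i_0}=2$, $a_j=1$, $v_j=2$: the pair $\{2,3\}$ maps to $\{3,2\}$). Your dichotomy on $a_{i_0}=a_j$ versus $a_{i_0}\neq a_j$ is only correct for the specific choice $v_{i_0}=0$, $v_j=1$, and arguing that such a choice is feasible again requires control over $V$ --- which brings you back to the missing adversary strategy. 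The paper sidesteps this entire line by rearranging the query pattern into a Ferrers shape and analysing two concrete $2\times 2$ corners, so that the two candidate inputs are explicit rather than obtained by a generic swap.
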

\begin{proof}
We will show that any query algorithm correctly computing $\Weights_{n, m}$ making less than $nm-3$ queries must make at least $\min\{(n-1)m - 1, n(m-1)\}$ queries. The adversary answers each query arbitrarily while respecting the weight condition and ensuring that at least $2$ ones and $2$ zeroes are left after each query. This can be done since the algorithm makes at most $nm-4$ queries by assumption and the weight $k$ satisfies $2 \leq k \leq N/2$.

Once all queries have been made, we may assume that for all $i < j$, the number of revealed bits of $x_i$ is at least that of $x_j$ by permuting the blocks if required. We may also assume that in each block $x_i$, an initial segment of bits has been queried and the later bits are all unqueried. It is convenient to imagine the revealed bits as lying inside a Ferrer's diagram which is enclosed inside an $m \times n$ box with the $i^{th}$ column representing $x_i$. 

Now we describe two cases in which the function value is not determined. 
\begin{itemize}
    \item Suppose the four bits in the $2 \times 2$ bottom right corner of the $m \times n$ box haven't been queried, i.e. the bits $x_{i, j}$ for $n-1 \leq i \leq n, m-1 \leq j \leq m$ are unqueried. Arbitrarily fix all remaining unqueried bits other than these $4$ while ensuring that $2$ ones and $2$ zeroes are left. Let $w_1$ and $w_2$ be the weights of $x_{n-1}$ and $x_{n}$ ignoring the unrevealed bits. Without loss of generality, suppose $w_1 \leq w_2$. When $x_{n-1, m-1} = x_{n, m-1} = 1$ and $x_{n-1, m} = x_{n, m} = 0$, the weights of $x_{n-1}$ and $x_n$ are $w_1+1$ and $w_2 + 1$ respectively. When $x_{n-1, m-1} = x_{n-1, m} = 0$ and $x_{n, m-1} = x_{n, m} = 1$, the weights of $x_{n-1}$ and $x_n$ are $w_1$ and $w_2 + 2$ respectively. Since $w_1 \leq w_2$, the multisets $\{w_1+1, w_2+1\}$ and $\{w_1, w_2 + 2\}$ are not equal. So if the algorithm computes $F$ correctly, it must have queried $x_{n-1, m-1}$ (by the way we have rearranged the blocks and the bits within each block).
    \item In the second case, suppose the bits $x_{n-2, m}, x_{n-1, m}, x_{n, m-1}, x_{n, m}$ have not been queried. Again arbitrarily fix the other unqueried bits while ensuring that $2$ ones and $2$ zeroes are left. Let $w_1$ and $w_2$ be the weights of $x_{n-1}$ and $x_{n}$ ignoring the unrevealed bits. Suppose $w_1 > w_2$. The two inputs with $x_{n-2, m} = 1, x_{n-1, m} = 0, x_{n, m-1} = 0, x_{n, m} = 1$ and $x_{n-2, m} = 1, x_{n-1, m} = 1, x_{n, m-1} = 0, x_{n, m} = 0$ do not have the same multiset of weights. Similarly for $w_1 \leq w_2$, the inputs with $x_{n-2, m} = 0, x_{n-1, m} = 1, x_{n, m-1} = 1, x_{n, m} = 0$ and $x_{n-2, m} = 0, x_{n-1, m} = 0, x_{n, m-1} = 1, x_{n, m} = 1$ do not have the same multiset of weights. So if the algorithm computes $F$ correctly, it must have queried at least one of these $4$ bits. By the rearrangement above, at least one of $x_{n-2, m}$ and $x_{n, m-1}$ must have been queried.
\end{itemize}

Next we combine the conclusions about the query algorithm obtained from the above cases. If it queries $x_{n-1, m-1}$ (from the first case) and $x_{n-2, m}$ (from the second case), it must have made at least $(n-2)m + m-1 = (n-1)m - 1$ queries. On the other hand, if it queries $x_{n, m-1}$ (from the second case), it must have made at least $(m-1)n$ queries. Together we have that the algorithm must have made at least $\min\{(n-1)m - 1, n(m-1)\}$ queries.
\end{proof}

When $k = nm/2$, we can improve the above lower bound to $\min\{nm-(m+1), nm - \ceil{\frac{n}{m}}\}$ which is almost tight. The dependence on the weight is necessary to some extent as will be shown later for $m = 2$.
\begin{prop}
For $k = nm/2$ and $n \geq 4$, $D(\Weights_{n, m, k}) \geq \min\{nm-(m+1), nm - \ceil{\frac{n}{m}}\}$. In particular, when $\ceil{\frac{n}{m}} > m$, $D(\Weights_{n, m, k}) = nm - \ceil{\frac{n}{m}}$.
\end{prop}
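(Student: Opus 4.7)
The plan is to extend the adversary argument of the previous proposition, exploiting the symmetry when $k = nm/2$ to sharpen both cases. The adversary answers each query arbitrarily, maintaining that the current partial answer extends to some weight-$nm/2$ completion and that at least two unqueried $0$s and two unqueried $1$s remain; both hold throughout since we may assume $q \leq nm - 4$. Suppose for contradiction that $q \leq \min\{nm-(m+1),\, nm - \ceil{\frac{n}{m}}\} - 1$, so the number $u := nm - q$ of unqueried bits satisfies both $u \geq m + 2$ and $u \geq \ceil{\frac{n}{m}} + 1$. Writing $r_i$ for the number of queries in block $i$ and $U_i := m - r_i$, I permute blocks so that $r_1 \geq \cdots \geq r_n$ and rearrange bits within each block so that the queried bits sit on top; the queried region is a Young diagram in the $n \times m$ grid. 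Let $T := \{i : U_i \geq 1\}$, $t := |T|$, and let $w_i'$ denote the partial weight (number of revealed $1$s) in block $i$.

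I split on $t$. If $t \geq \ceil{\frac{n}{m}} + 1$, I aim to show that the adversary's answers can be arranged so that two blocks $i, j \in T$ satisfy $w_i' \neq w_j'$. Given such $i, j$, fix any weight-$nm/2$ completion $C$ consistent with the answers and produce a second completion $C'$ by swapping an unqueried $0$ in block $i$ with an unqueried $1$ in block $j$; both are consistent with the same query answers, and the block-weight multisets differ because $\{w_i, w_j\}$ becomes $\{w_i + 1, w_j - 1\}$, which is a distinct multiset whenever $w_j \neq w_i + 1$, a condition that can be secured by choosing $C$ appropriately using the hypothesis $w_i' \neq w_j'$. If instead $t \leq \ceil{\frac{n}{m}}$, then $u \leq tm$ combined with $u \geq m + 2$ forces at least one block to have $U_i \geq 2$, and in fact the unqueried region contains a $2 \times 2$ or taller sub-block near the bottom-right; I adapt the $2 \times 2$ bottom-right switching argument from the proof of the previous proposition, using the balance $k = nm/2$ to supply a compensating bit change in another block whenever a naive swap would preserve the multiset.

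The main obstacle lies in the first case: since the adversary must commit to answers online while $T$ is only determined after the algorithm halts, one needs a strategy that guarantees two distinct values of $w_i'$ among blocks of $T$ regardless of how the algorithm interleaves its queries. One approach is for the adversary to earmark two candidate witness blocks based on their emerging query patterns and answer queries in them with opposite bias (one $0$-biased and one $1$-biased), using queries in (eventually) fully-queried blocks to absorb the global weight constraint. A cleaner alternative is a retrospective pigeonhole on the adversary's remaining freedom: because the algorithm makes few enough queries overall that $t \geq \ceil{\frac{n}{m}} + 1$, the distribution of $w_i'$ values across $T$ could not have been coerced into a constant without exceeding the query budget. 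The second case should follow the previous proof's second sub-case fairly routinely.
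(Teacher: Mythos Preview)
Your proposal has a genuine gap precisely where you flag the ``main obstacle'': an adversary that answers \emph{arbitrarily} (subject only to extendability and keeping two $0$s and two $1$s free) cannot be shown after the fact to have produced two distinct partial weights among the unfinished blocks. Indeed, against such an adversary the algorithm may query the first $m-1$ bits of every block and receive identical patterns in all of them; then all $w_i'$ are equal, every completion has each block weight in $\{w',w'+1\}$, and knowing the total weight pins down the multiset exactly. So the first branch of your case split simply fails for a generic adversary. Your ``retrospective pigeonhole'' suggestion is circular (it is the adversary's answers, not the algorithm's budget, that determine the $w_i'$), and ``earmark two witness blocks'' does not work because the adversary does not know which blocks will end up in $T$.

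The paper's proof supplies exactly the missing idea: the adversary fixes in advance an assignment $a:[n]\to\{0,\dots,m-1\}$ with each value used about $n/m$ times, and for every block $i$ commits to making the first $m-1$ revealed bits sum to $a(i)$. This guarantees, by pigeonhole on the $m$ possible values of $a$, that among any $\lceil n/m\rceil+1$ blocks whose last bit is still unqueried there are two with different partial weights; the ``last bits'' are managed through a separate multiset $S$ with enough $0$s and $1$s to realize either of two conflicting completions. Your second case also does not go through as written: from $t\le\lceil n/m\rceil$ and $u\ge m+2$ you cannot conclude that \emph{two} columns have $U_i\ge 2$ (take $U_n=m$ and $U_i=1$ for the remaining $t-1$ columns when $\lceil n/m\rceil\ge 3$), so the $2\times 2$ switching is unavailable. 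The paper instead splits on $|S|<4$ versus $|S|\ge 4$, which after rearrangement forces enough of the bottom row and bottom-right corner to have been queried that the $nm-(m+1)$ bound follows from the previous proposition's corner/L-shape argument.
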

\begin{proof}
The adversary picks an arbitrary assignment $a: [n] \rightarrow \{0,1, \dots, m-1\}$ such that each number in $\{0,1, \dots, m-1\}$ is assigned to roughly the same number of $i$'s ($|a^{-1}(j)| \in \{\floor{\frac{n}{m}}, \floor{\frac{n}{m}} + 1\}$) and $\sum_{j = 0}^{m-1} j |a^{-1}(j)| = \frac{nm}{2} - \floor{\frac{n}{2}}$. 
For each $x_i$, the adversary will reveal the bits of $x_i$ in a way that the first $m-1$ bits revealed in $x_i$ have weight $a(i)$. The adversary also keeps track of a multiset $S$ of final bits. Initially $|S| = n$ and $S$ has $\floor{\frac{n}{2}}$ ones. A bit is revealed from $S$ whenever the algorithm queries the last bit of some $x_i$. It is always ensured that as long as $|S| > 4$, after using a bit from $S$, $S$ still contains at least two $1$s and two $0$s. When $|S| = 4$ and a final bit is queried, the adversary abandons the current strategy and starts using the above basic strategy of just ensuring that out of all unrevealed bits, at least two are $1$s and two are $0$s.

Now we compute the minimum number of queries made to correctly compute $F$ against this strategy. First consider the case where $|S| \geq 4$ after all the queries have been made. In this case, since we still have at least two $1$s and two $0$s, by the argument for the above basic strategy, at least one of $x_{n-2, m}$ and $x_{n, m-1}$ must have been queried. (We assume that we have rearranged the blocks and the bits within each block as above.) Since we know that $|S| \geq 4$, $x_{n-2, m}$ could not have been queried. Therefore $x_{n, m-1}$ must have been queried. By design, this implies that for each $x_i$, the first $m-1$ bits have weight $a(i)$. Now note that if there exist $i \neq i'$ such that $a(i) \neq a(i')$, and $x_{i, m}, x_{i', m}$ both have not been queried, then $F$ is not yet determined since $\{a(i)+1, a(i')\} \neq \{a(i), a(i') + 1\}$. So there must be at least $m-1$ numbers $j$ in $\{0,1, \dots, m-1\}$ such that all $x_{i, m}$ for $a(i) = j$ have been queried. This implies that in addition to the $n(m-1)$ queries required to query $x_{n, m-1}$, at least $n - \ceil{\frac{n}{m}}$ queries must have been made. So in this case, totally $nm - \ceil{\frac{n}{m}}$ queries are required.

Next suppose $|S| < 4$ after all the queries have been made, i.e. $x_{n-3, m}$ was queried. Once this query is made the adversary uses the basic strategy above. From what was shown earlier, one of $x_{n-2, m}$ and $x_{n, m-1}$ must have been queried, along with $x_{n-1, m-1}$. If $x_{n-2, m}$ and $x_{n, m-1}$ are queried, then at least $nm - (m+1)$ queries must have been made. If $x_{n, m-1}$ was queried, we get that at least $n(m-1) + n-3 = nm - 3$ queries were made.
\end{proof}

For $n = 2$, $D(\Weights_{2, m, k}) = m = N/2$. The upper bound comes from querying $x_1$ completely (algorithm $A$). For the lower bound, consider the case where only $k-1$ $1$s (and possibly some $0$s) have been revealed in $x_1$, only $0$s have been revealed in $x_2$ and none of $x_1, x_2$ has been queried completely. In this case, it is possible that $|x_1| = k, |x_2| = 0$ or $|x_1| = k-1, |x_2| = 1$.

\begin{prop}
Suppose $m = 2, n \geq 4$.
\begin{enumerate}
    \item If $2 \leq k \leq \floor{\frac{n}{2}}$, $D(\Weights_{n, 2, k}) = n + k - 1 < \floor{\frac{3n}{2}}$.
    \item If $\floor{\frac{n}{2}} + 1 \leq k \leq n$, $D(\Weights_{n, 2, k}) = \floor{\frac{3n}{2}}$.
\end{enumerate}
\end{prop}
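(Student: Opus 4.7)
My plan is to prove matching upper and lower bounds for each of the two ranges of $k$. For the upper bound in case~1 ($k\le\lfloor n/2\rfloor$) I will run the algorithm that first queries all $n$ first bits $x_{i,1}$: if the count of ones $|S_1|$ equals $k$ then the weight constraint forces every second bit to be $0$, so the multiset is determined in $n$ queries; otherwise $|S_1|\le k-1$, and querying $x_{i,2}$ for each $i\in S_1$ determines $n_2 = \sum_{i\in S_1}x_{i,2}$ (since no block with first bit $0$ can be weight-$2$), for a total cost of at most $n+k-1$ queries. For the upper bound in case~2, algorithm~$B$ from the proof of Theorem~\ref{thm:compsym} specialised to $m=2$ uses at most $n+\lfloor n/2\rfloor=\lfloor 3n/2\rfloor$ queries.

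For the case~1 lower bound I plan an adversary argument that maintains, after each query, the existence of two consistent inputs with different $n_2$ values. The adversary's strategy is to answer $0$ to each query whenever doing so still allows at least one block to be extended to a weight-$2$ block (so that the $n_2\ge 1$ scenario stays alive), and to switch to $1$ otherwise (namely, when answering $0$ would destroy the weight-$2$ potential of every remaining block). To analyse this strategy I partition blocks into four classes --- fully queried, partial with its queried bit equal to $0$, partial with its queried bit equal to $1$, and untouched --- whose counts $(f,p_0,p_1,u)$ satisfy $f+p_0+p_1+u=n$ and $2f+p_0+p_1=q$. The invariant to maintain is that there is a placement of the remaining unrevealed ones creating a weight-$2$ block (witnessing $n_2\ge 1$) and another placement that spreads them so that no block is weight-$2$ (witnessing $n_2=0$). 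I will show by a case split on $(f,p_0,p_1,u)$ that for every $q\le n+k-2$ the adversary can maintain this invariant.

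For the case~2 lower bound I plan a dual argument: the adversary keeps canonical scenarios near the maximum $n_2=\lfloor k/2\rfloor$ and maintains ambiguity between them and the next lower value by considering perturbations that move a single one between a weight-$2$ block and a weight-$0$ block. Counting the bits the algorithm must query to eliminate all such perturbations yields the $\lfloor 3n/2\rfloor$ lower bound; essentially, the adversary forces the configuration where exactly half of the first bits are $1$ and half are $0$, after which the algorithm must query one of the two classes of second bits in full.

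The main technical obstacle is the case~1 lower bound. Answering $0$ to every query works cleanly as long as some block remains untouched (for $q\le n-1$ queries). Once every block has been touched, the adversary has had to answer $1$ at least once, creating a partial block with weight-$2$ potential; the delicate part is showing that for the remaining $k-2$ queries the adversary can continue to answer so that both an $n_2=0$ completion (spreading the remaining ones) and an $n_2\ge 1$ completion (concentrating them into one weight-$2$ block) coexist, even against query patterns designed to target the unrevealed bit of the partial-$1$ block or to rapidly fully query many blocks. A secondary obstacle is case~2 for small $n$ such as $n=4$, where the earlier proposition's bound $\min\{2n-3,\lfloor 3n/2\rfloor\}$ falls short of $\lfloor 3n/2\rfloor$, so the dual adversary must be analysed directly to cover the range $n/2<k\le n$ uniformly.
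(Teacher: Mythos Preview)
Your upper bounds are fine and coincide with the paper's (algorithm~$B$ for case~2, and the ``query all first bits, then the second bits of the $1$-columns'' algorithm for case~1).

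The case~1 lower bound, however, has a real gap. Your adversary rule ``answer $0$ whenever some block can still be extended to weight $2$, answer $1$ otherwise'' is too passive and does not yield $n+k-1$. Take $n=6$, $k=3$, so the target lower bound is $8$. Against the algorithm that queries $x_{1,1},x_{2,1},\dots,x_{5,1}$, your rule answers $0$ each time (block~$6$ is untouched and retains weight-$2$ potential). On the sixth query $x_{6,1}$, answering $0$ would kill all weight-$2$ potential, so you answer $1$. On the seventh query $x_{6,2}$, again answering $0$ would kill the unique remaining weight-$2$ candidate, so you answer $1$. Now block~$6$ is $11$, every other block has first bit $0$, and $n_2=1$ is forced; the multiset $\{2,1,0,0,0,0\}$ is determined after only $7<n+k-1$ queries. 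More generally, your rule creates at most \emph{one} block with first bit $1$, and two further queries on that block pin down $n_2$, so the bound you actually get is about $n+2$ rather than $n+k-1$. The case split on $(f,p_0,p_1,u)$ you propose cannot rescue this, because the invariant itself is already lost at query $7$ in the example above---no answer to $x_{6,2}$ keeps both $n_2=0$ and $n_2\ge 1$ alive.

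The missing idea, which the paper's adversary implements, is to start answering $1$ \emph{proactively}: as soon as fewer than $k-1$ other blocks remain untouched, a first query into a fresh block is answered $1$. This manufactures roughly $k-1$ blocks with first bit $1$, each a separate weight-$2$ candidate. The algorithm must then fully query essentially all of them (second bits answered $0$) before $n_2$ is determined, which costs the extra $k-1$ queries. The paper's endgame analysis then shows that, after the algorithm halts, one cannot simultaneously have a partial block with revealed bit $0$ and a partial block with revealed bit $1$, and counting queries in the two resulting cases gives $n+k-1$. Your case~2 sketch has the same defect in outline (``force half the first bits to be $1$'' presumes the algorithm queries first bits first, and a greedy rule again produces too few weight-$2$ candidates); the fix is the analogous threshold $\lfloor n/2\rfloor$ in place of $k-1$.
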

\begin{proof}
\begin{enumerate}
    \item The upper bound follows from a modification of algorithm $B$. Query $x_{i, 1}$ for all $i \in [n]$. If $k$ $1$s are seen or no $1$s are seen, the weights are $1^k, 0^{n-k}$. Otherwise query $x_{j, 2}$ for all $j$ such that $x_{j, 1} = 1$. This uses at most $k-1$ additional queries. So totally at most $n+k-1$ queries are made.
    
    For the lower bound, the adversary uses the following strategy:
    \begin{itemize}
        \item If the block containing the current query has no revealed bit and at least $k-1$ other blocks have not been queried at all, answer $0$.
        \item If the block containing the current query has no revealed bit and less than $k-1$ other blocks have not been queried at all, answer $1$.
        \item If the other bit in the block queried is $1$, answer $0$.
        \item If the other bit in the block queried is $0$ and there are $n-k$ other blocks with two $0$s, answer $1$.
        \item Otherwise, answer $0$.
    \end{itemize}
    Note that the above strategy generates a legal input ($(00)^{n-k}, 01, (10)^{k-1}$) if all $2n$ bits are queried (after permuting the blocks and the bits within them if required). Suppose an algorithm computing $F$ makes $t$ queries on the above strategy. After all queries have been made and revealing a bit according to the above strategy in each block which has not been queried at all, it cannot be that there exist a block whose only revealed bit is $0$ and another block whose only revealed bit is $1$, since both $1^k, 0^{n-k}$ and $2, 1^{k-2}, 0^{n-k+1}$ would be possible answers. So either all blocks whose first bit was revealed to be $0$ are completely queried or all the other blocks are completely queried. In the latter case, since we first reveal at least $n-k+1$ $0$s before any $1$s, we must have $t \geq n-k+1 + 2(k-1) = n+k-1$.
    
In the other case, we claim that at least $\min\{2(n-k+1) + k-1 = 2n - (k-1), 2(n-1)\}$ queries must have been made. 
We will show that if a $1$ and two $0$s have not been revealed, then the function value is not determined. Consider the input $((00)^{n-k}, 01, {10}^{k-1})$. (The blocks and bits within them have been rearranged as done previously.) This has weights $0^{n-k}, 1^k$. On the other hand, we have the input $(00)^{n-k}, 01, (10)^{k-3}, 11, 00$. This input has weights $0^{n-k+1}, 1^{k-2}, 2$. To ensure, that it is not the case that a $1$ and two $0$s have not been revealed after all queries have been made, the algorithm must have queried at least one of $x_{n-1, 2}$ or $x_{n, 1}$. In the former case, at least $2(n-1)$ queries are made and in the latter, at least $2(n-k+1) + k-1$ queries are made. Now $2n - 2 \geq n+k-1$. Also $2n-(k-1) \geq n+k-1$ since $n \geq 2k$.
    
    \item The upper bound follows from algorithm $B$.
    
    For the lower bound, the adversary uses the above strategy with some small changes:
    \begin{itemize}
        \item If the block containing the current query has no revealed bit and at least $\floor{\frac{n}{2}}$ other blocks have not been queried at all, answer $0$.
        \item If the block containing the current query has no revealed bit and less than $\floor{\frac{n}{2}}$ other blocks have not been queried at all, answer $1$.
        \item If the other bit in the block queried is $1$, answer $0$.
        \item If the other bit in the block queried is $0$ and less than $n-k$ blocks have both $0$s, answer $0$.
        \item If the other bit in the block queried is $0$ and at least $n-k$ blocks have both $0$s, answer $1$.
    \end{itemize}
    The arguments for bounding the number of queries are the same as above. Here we just state the minimum number of queries required in the different cases. If all blocks whose first bit is $1$ are completely queried, at least $n + \floor{\frac{n}{2}} = \floor{\frac{3n}{2}}$ queries must have been made. In the other case, again since $\floor{\frac{n}{2}} \geq 2$ at least one of $x_{n-1, 2}$ or $x_{n, 1}$ must have been queried. So at least $\min\{2(n-2), n + \ceil{\frac{n}{2}}\} \geq \floor{\frac{3n}{2}}$ queries are made in this case as well.\qedhere    
\end{enumerate}
\end{proof}
\end{document}